\documentclass[conference]{IEEEtran}
\IEEEoverridecommandlockouts

\usepackage{cite}
\usepackage{amsmath,amssymb,amsfonts,amsthm}
\usepackage{algorithmic}
\usepackage{graphicx}
\usepackage{textcomp}
\usepackage{enumitem}
\usepackage{multirow}
\usepackage{listings}
\usepackage{thmtools}
\usepackage{thm-restate}
\usepackage{hyperref}
\usepackage{cleveref}
\usepackage{pifont}
\usepackage{subfig}
\usepackage{booktabs}
\usepackage{array}
\usepackage{xspace}

\usepackage{listings, xcolor}
\newtheorem{lemma}{Lemma}
\newcommand{\xmark}{\ding{55}}
\newcommand{\cmark}{\ding{52}}
\newcommand{\mypara}[1]{\noindent{\bf {#1}.}\xspace}

\definecolor{verylightgray}{rgb}{.97,.97,.97}
\lstdefinelanguage{Solidity}{
    keywords=[1]{
        anonymous, assembly, assert, balance, break, call, callcode, case, catch, class, constant, continue, 
        constructor, contract, debugger, default, delegatecall, delete, do, else, emit, event, experimental, 
        export, external, false, finally, for, function, gas, if, implements, import, in, indexed, instanceof, 
        interface, internal, is, length, library, log0, log1, log2, log3, log4, memory, modifier, new, payable, 
        pragma, private, protected, public, pure, push, require, return, returns, revert, selfdestruct, send, 
        solidity, storage, struct, suicide, super, switch, then, this, throw, transfer, true, try, typeof, using, 
        value, view, while, with, addmod, ecrecover, keccak256, mulmod, ripemd160, sha256, sha3
    },
    keywordstyle=[1]\color{blue}\bfseries,
    keywords=[2]{
        address, bool, byte, bytes, bytes1, bytes2, bytes3, bytes4, bytes5, bytes6, bytes7, bytes8, bytes9, 
        bytes10, bytes11, bytes12, bytes13, bytes14, bytes15, bytes16, bytes17, bytes18, bytes19, bytes20, 
        bytes21, bytes22, bytes23, bytes24, bytes25, bytes26, bytes27, bytes28, bytes29, bytes30, bytes31, 
        bytes32, enum, int, int8, int16, int24, int32, int40, int48, int56, int64, int72, int80, int88, int96, 
        int104, int112, int120, int128, int136, int144, int152, int160, int168, int176, int184, int192, int200, 
        int208, int216, int224, int232, int240, int248, int256, mapping, string, uint, uint8, uint16, uint24, 
        uint32, uint40, uint48, uint56, uint64, uint72, uint80, uint88, uint96, uint104, uint112, uint120, 
        uint128, uint136, uint144, uint152, uint160, uint168, uint176, uint184, uint192, uint200, uint208, 
        uint216, uint224, uint232, uint240, uint248, uint256, var, void, ether, finney, szabo, wei, days, 
        hours, minutes, seconds, weeks, years
    },
    keywordstyle=[2]\color{teal}\bfseries,
    keywords=[3]{
        block, blockhash, coinbase, difficulty, gaslimit, number, timestamp, msg, data, gas, sender, sig, 
        value, now, tx, gasprice, origin
    },
    keywordstyle=[3]\color{violet}\bfseries,
    identifierstyle=\color{black},
    sensitive=true,
    comment=[l]{//},
    morecomment=[s]{/*}{*/},
    commentstyle=\color{gray}\ttfamily,
    stringstyle=\color{red}\ttfamily,
    morestring=[b]',
    morestring=[b]"
}

\lstset{
    language=Solidity,
    backgroundcolor=\color{verylightgray},
    extendedchars=true,
    basicstyle=\footnotesize\ttfamily,
    showstringspaces=false,
    showspaces=false,
    numbers=left,
    numberstyle=\tiny\color{gray},
    numbersep=3pt,
    tabsize=1,
    breaklines=true,
    showtabs=false,
    captionpos=b
}
\def\BibTeX{{\rm B\kern-.05em{\sc i\kern-.025em b}\kern-.08em
    T\kern-.1667em\lower.7ex\hbox{E}\kern-.125emX}}

\begin{document}

\title{Cross-Chain Options: A Bridgeless, Universal, and Efficient Approach}

\author{
Zifan Peng$^1$  \quad 
Yingjie Xue$^{1,2}$\textsuperscript{†}\thanks{\textsuperscript{†}Corresponding Author: Yingjie Xue}
\quad Jingyu Liu$^1$
\\ 
$^1$Financial Technology Thrust, Hong Kong University of Science and Technology (Guangzhou) \\
$^2$School of Cyber Science and Technology, University of Science and Technology of China \\
zpengao@connect.hkust-gz.edu.cn, yjxue@ustc.edu.cn, jliu514@connect.hkust-gz.edu.cn
}

\maketitle

\begin{abstract}
Options are fundamental to blockchain-based financial services, offering essential tools for risk management and price speculation, which enhance liquidity, flexibility, and market efficiency in decentralized finance (DeFi).
Despite the growing interest in options for blockchain-resident assets, such as cryptocurrencies, current option mechanisms face significant challenges, including a high reliance on trusted third parties, limited asset support, high trading delays, and the requirement for option holders to provide upfront collateral.

In this paper, we present a protocol that addresses the aforementioned issues.
Our protocol is the first to eliminate the need for holders to post collateral when establishing options in trustless service environments (i.e., without a cross-chain bridge), which is achieved by introducing a guarantee from the option writer.
Its universality allows for cross-chain options involving nearly \textit{any} assets on \textit{any} two different blockchains, provided the chains' programming languages can enforce and execute the necessary contract logic.
Another key innovation is reducing option position transfer latency, which uses Double-Authentication-Preventing Signatures (DAPS).
Our evaluation demonstrates that the proposed scheme reduces option transfer latency to less than half of that in existing methods.
Rigorous security analysis proves that our protocol achieves secure option trading, even when facing adversarial behaviors.
\end{abstract}

\begin{IEEEkeywords}
Blockchain, Decentralized Finance, Option, Cross-Chain, Atomic Swap.
\end{IEEEkeywords}

\section{INTRODUCTION}
\label{sec:intro}
The emergence of blockchains, such as Bitcoin~\cite{bitcoinWhitepaper} and Ethereum~\cite{ethereumWhitepaper}, has given rise to \textit{decentralized finance} (DeFi), transforming traditional financial systems by eliminating the need for intermediaries.
DeFi's open, transparent, and permissionless architecture enables users to access a wide range of financial services, including lending, borrowing, and trading, without being constrained by geographical or institutional barriers.
\textit{Smart contracts} further enhance its appeal by automating transactions, reducing operational costs, and mitigating risks like fraud or human error.
As of January 2025, DeFi protocols hold a total value locked (TVL) of approximately 102 million USD allocated to options markets~\cite{defillama}.

Options are crucial components of DeFi markets, offering participants more sophisticated tools for risk management and investment strategies.
Options provide traders with flexibility in uncertain markets, enabling them to hedge against potential price fluctuations or speculate on future asset values.
An \textit{option} is a bilateral agreement between two parties, where one party holds the right, but not the obligation, to execute a transaction under predetermined conditions.
Assume Alice buys an option from Bob.
Then Alice is called the \textit{holder} and Bob is called the \textit{writer} of the option.
Say the option grants Alice the right to purchase 100 ``guilder'' coins using 100 ``florin'' coins before the weekend.
In exchange for this right, Alice pays Bob 2 florin coins upfront as a \textit{premium}.
If the value of guilder coins rises during the week, Alice can \textit{exercise} her option to acquire them at the lower, agreed-upon price, securing a profit.
However, if the value of guilder coins drops, Alice can choose to \textit{abandon} the option.

There are some existing protocols for trading options on blockchain.
However, these protocols lack \textit{universality}.
Most protocols~\cite{lyrafinance2024, opyn, hegicwhitepaper, aevo, deriprotocol_whitepaper_v4_2023} cannot use assets from any chain as the underlying asset.
They only support options involving specific cryptocurrencies, such as BTC and ETH, on mainstream blockchains like Ethereum~\cite{ethereumWhitepaper}.
These protocols do not allow users to independently create options for non-mainstream cryptocurrencies or other digital assets across all blockchains with customized prices.

To trade options involving any two assets, users can use \textit{cross-chain bridges} to interconnect different chains.
However, current cross-chain bridges are not mature enough for widespread adoption due to numerous security risks and high expenses.
According to statistics~\cite{SlowMist}, cross-chain bridges have experienced at least 47 incidents, resulting in approximately 1.8 billion US dollars being hacked.
Narrowly speaking, cross-chain bridges refer to protocols that enable the transfer of assets or data between different blockchain networks~\cite{cross_chain_security, chainlinkcrosschain, zhangbridges}.
They can be categorized into ``external'' and ``native'' based on their verification methods.\footnote{Some research includes optimistic verification. We categorize optimistic verification as external since it requires at least one honest watcher.}
\textit{External verification} introduces a trusted third party\footnote{The trusted third party here can also refer to Trusted Execution Environments (TEE) or a committee for cross-chain transactions.} to facilitate message transmission. This approach is vulnerable to various attacks, such as rug pulls, private key leakage~\cite{zhangbridges,SlowMist,OrdiZK,polyHack,ALEX}, and software vulnerabilities~\cite{TEESecurity,TEECS,TEEsoftAttack}. 
\textit{Native verification}~\cite{daveasGasEfficientSuperlightBitcoin2020,Glimpse,zamyatinXCLAIMTrustlessInteroperable2019,zkBridge} uses light clients deployed on-chain to verify proofs. One representative solution is zkBridge~\cite{zkBridge}.
However, developing complex light clients for each heterogeneous blockchain is expensive.
The cross-chain verification process is also costly and complicated due to the use of zero-knowledge proofs.

Another widely used technique for cross-chain transactions is the \textbf{Hashed TimeLock Contract} (HTLC)~\cite{AtomicSwaps, Fault-Tolerant}.
HTLCs do not explicitly involve forwarding assets or information from one chain to another for verification, and they only require verification by the parties involved in the transaction.
Parties locally verify the state on one chain and then decide whether to proceed on the other chain.
Researchers~\cite{nieto2018trustminimized,FairnessAtomicSwaps,arwenwhitepaper,liu2020atomic,cfa} have proposed using HTLCs to establish options that effectively address the limitation mentioned above: they enable the trade of any assets across any chain\footnote{Here, we use ``any'' to stress on the wide range of chains. Indeed, it means a blockchain that has a Turing-complete language for smart contracts.}.
HTLCs were initially designed for atomic swaps.
They are shown to be equivalent to a premium-free American call option, which gives holders the right to purchase the underlying asset.~\cite{FairnessAtomicSwaps}.
However, current HTLC-based option protocols face significant challenges regarding collateral and transferability.

For \textbf{collateral}, option traders on two chains must lock their assets as collateral when establishing the option in HTLC.
For example, Alice and Bob are required to lock 100 ``florin'' and 100 ``guilder'' on two chains respectively in the previous example.
Existing protocols require Bob to lock full collateral on his chain, which is reasonable since he bears the risk of fulfilling the obligation, similar to traditional options.
However, it also requires Alice, as the option holder, to provide full collateral upfront.
This is problematic because options are typically much cheaper than their underlying assets and are primarily used by holders for leverage.
To address this, the requirement for upfront collateral should be eliminated.
Eliminating the need for upfront collateral poses non-trivial obstacles.
In HTLC-based options, the option is structured as the holder’s right to exchange their assets for the writer’s assets.
The holder has a key to unlock the writer's collateral, while the writer must wait for the holder to reveal this key to unlock the holder's collateral.
To ensure the writer receives the holder’s assets when the option is exercised, the holder is required to provide collateral.
The writer cannot directly verify if the holder's collateral deposit has actually occurred on another chain in trustless service environments.

Therefore, without the holder's upfront collateral, there is a risk that the writer could give out their assets without receiving anything in return from the holder.
We can possibly utilize cross-chain bridges to solve the problem.
However, as mentioned earlier, current cross-chain bridges present significant risks.
Therefore, our goal is to design a protocol in a trustless service environment to facilitate option traders' transactions.

For \textbf{transferability}, most research focuses on establishing the option contract between the holder and writer.
Once an option is established, it should be tradable since it is valuable.
However, most works overlook this crucial feature.
Previous researchers~\cite{transferable} introduced transferable cross-chain options (short for TCO), allowing options to be transferred post-establishment.
Despite its innovation, their approach has two limitations.
The first one is prolonged transfer time.
In this protocol, the option transfer requires a staggered $9\Delta$ (approximately 9 hours on Bitcoin), where $\Delta$ represents the time period sufficient for each party to release, broadcast, and confirm transactions on the blockchain, typically 1 hour on Bitcoin.
The second one is that it is vulnerable to the \textit{Phantom Bid Attack}, where an adversary creates multiple fake buyers who offer higher prices but fail to finalize the transfer, the option holder/writer is unable to sell their position.

\mypara{Contributions}
Building on existing works, two key challenges remain: removing upfront collateral for holders and enabling efficient and robust option transfers.
In this paper, we introduce a cross-chain option protocol that offers universal accessibility for various assets, enables efficient option transfers, and eliminates the need for upfront holder collateral.
A comparison of other option protocols with ours is shown in \Cref{tab:compare}. Our contributions are summarized as follows:

\begin{table*}[ht]
    \centering
    \caption{Comparison between our protocol and other cross-chain protocols. ``Universal'' represents whether or not parties are allowed to create their own assets on different chains for option trading, and $\Delta$ represents the block confirmation time.}
    \resizebox{\textwidth}{!}{
    \begin{tabular}{c|ccccc|>{\raggedright\arraybackslash}p{7.3cm}}
        \toprule
        \textbf{Protocol} & \textbf{Collateral-free} & \textbf{No External Trust} & \textbf{Universal} & \textbf{Transferable} & \textbf{Transfer Time} & \multicolumn{1}{c}{\textbf{Features}}\\
        \midrule
        Arwen~\cite{arwenwhitepaper} & \color[HTML]{FF4040}\xmark  & \color[HTML]{FF4040}\xmark & \color[HTML]{FF4040}\xmark & \color[HTML]{FF4040}\xmark & - & Off-chain matching and on-chain escrow accounts with HTLC.\\
        \midrule
        Deri V4~\cite{deriprotocol_whitepaper_v4_2023} & \color[HTML]{45C03C}\cmark  & \color[HTML]{FF4040}\xmark & \color[HTML]{FF4040}\xmark & \color[HTML]{FF4040}\xmark & - & Require market making and inputs from an external oracle. \\
        \midrule

        Fernando~\cite{nieto2018trustminimized} & \color[HTML]{FF4040}\xmark & \color[HTML]{45C03C}\cmark & \color[HTML]{45C03C}\cmark & \color[HTML]{FF4040}\xmark & - & HTLC is implemented using Bitcoin script.\\
        \midrule
        
        CFA~\cite{cfa} & \color[HTML]{FF4040}\xmark & \color[HTML]{45C03C}\cmark & \color[HTML]{45C03C}\cmark & \color[HTML]{FF4040}\xmark & - & Compensate to conforming party during option setup phase. \\
        \midrule

        Liu~\cite{liu2020atomic} & \color[HTML]{FF4040}\xmark & \color[HTML]{45C03C}\cmark & \color[HTML]{45C03C}\cmark & \color[HTML]{FF4040}\xmark & - & Allow holder to terminate the option and withdraw collateral.\\
        \midrule
        
        Han \textit{et al.}~\cite{FairnessAtomicSwaps} & \color[HTML]{FF4040}\xmark & \color[HTML]{45C03C}\cmark & \color[HTML]{45C03C}\cmark & \color[HTML]{FF4040}\xmark & - & A method for paying the HTLC-based premium.\\
        \midrule
        
        TCO~\cite{transferable} & \color[HTML]{FF4040}\xmark & \color[HTML]{45C03C}\cmark & \color[HTML]{45C03C}\cmark & \color[HTML]{45C03C}\cmark & $9\Delta$ & HTLC-based cross-chain options position transfer is proposed.\\
        
        \midrule
        \textbf{Our Protocol} & \color[HTML]{45C03C}\cmark & \color[HTML]{45C03C}\cmark & \color[HTML]{45C03C}\cmark & \color[HTML]{45C03C}\cmark & $4\Delta$ & \textbf{Fully trustless, transferable, and efficient.} \\
        \bottomrule
    \end{tabular}
    }
    \label{tab:compare}
\end{table*}

\begin{itemize}
    \item \textbf{Efficient and Robust Option Transfer.}
    We propose an option transfer protocol that enables the transfer of assets with less latency and more resilience to phantom bid attacks.
    The key idea is the adoption of \textbf{Double-Authentication-Preventing Signatures} (DAPS) to option transfer finalization.
    By DAPS, our proposed protocol completes the option transfer more than twice as fast as state-of-the-art methods when all parties are honest.
    It can defend against the phantom bid attacks by finalizing the option transfer solely by the option seller.
    
    \item \textbf{Holder Collateral-Free Cross-Chain Options}.
    We design an option protocol without requiring the holder to post collateral when the option is established.
    The key idea is the introduction of economic incentives.
    We let the option writer provide a fair \textit{guarantee} before the option is established.
    The guarantee serves as a compensation to the holder if the writer fails to fulfill the obligation when the holder exercises the option.
    
    \item Combining the above two schemes, we propose a protocol for efficient cross-chain options without upfront collateral from holders, achieving an option trading protocol closer to those of traditional options.
    What's more, the protocol can support concurrent bidding of multiple potential buyers.
    The option transfer time is constant regardless of the number of bidders.
    
    \item We provide a rigorous security analysis and implementation of our proposed protocol. 
\end{itemize}

The rest of the paper is outlined as follows.
\Cref{sec:model} introduces the preliminaries.
\Cref{sec:protocol} provides an overview of the holder collateral-free cross-chain option and the efficient option transfer protocol.
Detailed descriptions of the protocols are presented in \Cref{sec:details}.
Security properties are analyzed and proven in \Cref{sec:analysis}.
We present implementation and evaluation details in \Cref{sec:impl}.
Related works are discussed in \Cref{sec:related}.
\Cref{sec:conclusion} summarizes this paper.

\section{PRELIMINARIES}
\label{sec:model}
\mypara{Blockchain, Smart Contract, and Asset}
A \textit{blockchain} is a tamper-proof distributed ledger that records asset balances for each address.
An \textit{asset} can be a cryptocurrency, a token, or any item transactable on-chain.
A \textit{party} can be an individual, organization, or any entity capable of interacting with the blockchain.
A \textit{smart contract} (simply as ``contract'') is an immutable agreement written in code and executed by a blockchain system.
Parties can create contracts, call functions, and check contract code and state.
$\Delta$ represents the sufficient time period for the parties to release, broadcast, and confirm transactions on the blockchain.

\mypara{Cryptographic Primitives}
A secret is a fixed-length sequence of bits.
A secret is known exclusively to its generator, and $H(\cdot)$ represents a collision-resistant hash function.
In asymmetric encryption, the private key $sk$ is used confidentially for signing, while the public key $pk$ is shared openly.

In this paper, Double-Authentication-Preventing Signatures (DAPS)~\cite{DAPS} is a key component in our protocol design.
Initially, DAPS is designed to inhibit the reuse of a single private key for signing two different messages, where a message consists of a pair of message address and message payload in the form of $(a, p)$.
DAPS ensures that a particular secret key \( sk \) cannot sign the same address \( a \) with different payloads \( p \). This property can be used to prevent double spending in blockchains.
Two messages $m_1 = (a_1, p_1)$ and $m_2 = (a_2, p_2)$ are considered colliding if $a_1 = a_2$ and $p_1 \neq p_2$.
Any two signatures with the identical address but different contents will disclose the secret key. Given a security parameter $\lambda$, DAPS can be delineated as follows.

\begin{itemize}
    \item \textbf{Key Generation:} \(\text{KeyGen}(1^\lambda)  \rightarrow (pk, sk)\)
    \item \textbf{Signature:} \(\text{Sign}(sk, m)  \rightarrow \sigma_m, \text{ where } m = (a, p)\)
    \item \textbf{Verification:} \(\text{Verify}(pk, m, \sigma_m) \rightarrow \text{True/False}\)
    \item \textbf{Extraction:} \(\text{Extract}(pk, m_1, \sigma_{m_1}, m_2, \sigma_{m_2}) \rightarrow sk/\bot\)
\end{itemize}

\mypara{Hashed TimeLock Contracts (HTLCs)}
The Hashed TimeLock Contract (HTLC) is a cryptographic contract utilized to facilitate secure and trustless transactions. In a vanilla HTLC, funds are locked in a contract and can only be accessed by the designated recipient upon fulfillment of predetermined conditions within a specified time frame $T$. The condition is expressed as the presentation of a preimage of the hash. For example, the contract asks the designated recipient to present the preimage $A$ for the hash $H(A)$. If $A$ is not provided before $T$, the funds are refunded after $T$.

\section{OVERVIEW}
\label{sec:protocol}
Consider the scenario described in \Cref{sec:intro}.
Alice and Bob strike a deal: Bob agrees to sell his 100 guilder coins to Alice in exchange for 100 florin coins within one week.
To secure this arrangement, Alice pays a premium with another kind of asset, e.g., 1 US dollar, for the option to complete the transaction.
In the world of blockchain, the 100 florin and 100 guilder coins are referred to as $Asset_A$ on $Chain_A$ and $Asset_B$ on $Chain_B$, respectively, with one week as the expiration time $T_E$.
A 1 US dollar premium could be an asset $P$ on another $Chain_P$, including $Chain_A$ and $Chain_B$.
An HTLC-based option requires Alice to pay a premium $P$ and escrow $Asset_A$ at the beginning of the option transaction, followed by Bob escrowing $Asset_B$.

Since this option itself is valuable, others may want to make transactions with the holder or writer.
Suppose there are $L$ holder position bidders, $Carol_i$, and $M$ writer position bidders, $Dave_j$, where $i \in \{1, 2, \ldots, L\}$ and $j \in \{1, 2, \ldots, M\}$.
They are willing to pay a \textit{holder transfer fee}, $\texttt{HolderFee}_i$, and a \textit{writer transfer fee}, $\texttt{WriterFee}_j$, to obtain the option position from Alice and Bob, respectively.
The holder transfer fee represents the price for the option and Alice's asset locked in the contract, while the writer transfer fee is the fee to acquire Bob's risky asset $Asset_B$ locked on chain with obligations of this option tied to that asset.

\mypara{Making transfer more robust and efficient}
Consider an HTLC-based option where Alice and Bob lock their assets with the hashlock \(H(A)\), where \(A\) is the \textit{exercise secret} generated by Alice.
Consider the case where Carol purchases Alice's position.
In the previous work (TCO)~\cite{transferable}, the protocol first locks the contracts by tentatively assigning a new hashlock to replace the old one.
Alice can place different hashlocks on two chains, and Carol may replace Alice on one chain but not another.
Bob is given a time-consuming consistency phase to ensure the new hashlocks are consistent.
We would like to reduce the time needed for Alice's position transfer.

We manage to reduce the transfer time by a key observation.
Since Alice cannot transfer the option to multiple bidders simultaneously, it logically prompts the use of Double-Authentication-Preventing Signatures (DAPS) to prevent a seller from selling signatures to multiple bidders, thus removing the requirement of guaranteeing consistency by Bob's efforts.
If multiple signatures are revealed, a secret key can be extracted by DAPS, and then punishment is enforced automatically to ensure a fair payoff. With DAPS, the transaction completion in our protocol is less than half the time of TCO~\cite{transferable}.

\mypara{Holder Collateral-Free Cross-Chain Options}
Our objective is to allow Alice to pay a premium to secure this right, without the need to escrow the assets in advance.
We need a mechanism to enable the correct exercise of this right \textemdash Alice pays her florin coins to get Bob's guilder coins.
Alice cannot get Bob's coins without paying florin coins to Bob.
A naive approach would be to require cross-chain transaction confirmation of Alice's escrow when Alice decides to exercise.
However, every blockchain cannot directly get the cross-chain transaction confirmation information from another chain.
This approach requires cross-chain bridges, which suffer from various security issues~\cite{sokHacks,SlowMist}, such as key leakage~\cite{ALEX}, smart contract vulnerabilities~\cite{polyHack,ronin}, and rug pulls~\cite{5RugPulls,OrdiZK}.
The goal is to design a holder collateral-free option without a trusted cross-chain bridge.

If we grant Alice direct access to the exercise right (exercise secret in HTLC), then Bob's interests are not protected, as Alice could withdraw Bob's coins directly.
To resolve this problem, we resort to economic incentives that are commonly present in the DeFi markets, which are also driving forces for options.
We let Bob control the exercise secret while Alice retains the right to penalize Bob.
In addition to a collateral (100 guilder coins in our example) required by normal option contracts, we ask Bob to lock another valuable asset on $Chain_A$ as a guarantee for Alice when she wants to exercise the right.
If Alice wants to exercise her right after 2 days, she locks her asset on $Chain_A$ (100 florin coins in our example).
If Bob does not release the exercise secret, Alice will get Bob's guarantee as compensation.
Suppose this guarantee is sufficient to compensate for the expected profit of this option.
This method provides Bob with incentives to cooperate when Alice wants to exercise her right.

\mypara{Discussion on the Guarantee}
Considering the value of guarantee, i.e., $Asset_G$, options trading occurs due to differing asset expectations.
Alice buys options for speculation or insurance, while Bob earns a premium.
This option ensures that Alice gets at least $Asset_G$ upon exercising.
For rational Alice, if her expected return and the agreed $Asset_G$ exceed the premium, she will purchase the option.
Typically, this premium is about 2\%~\cite{FairnessAtomicSwaps} of the underlying asset.
Therefore, the guarantee addition by $Asset_G$ is insignificant.
Our work focuses on the mechanism setup, leaving precise pricing for future work.

Additionally, $Asset_G$ can be any asset agreed upon by Alice and Bob, such as stablecoins to mitigate potential risks, highly volatile assets, or assets with low liquidity, like digital collections (e.g., NFTs). Thus, this protocol is attractive to writers with limited liquidity.
Depending on Alice's risk tolerance, she can choose whether to accept a volatile $Asset_G$ as a guarantee.
A risk-averse Alice should opt for a stable-valued $Asset_G$.
In summary, the choice of guarantee comes from the traders' negotiation and willingness.
Our proposed protocol provides trading opportunities for holders and writers who accept the risks.

\mypara{Integration} 
We integrate the efficient option transfer protocol into the holder collateral-free cross-chain options.
Bob controls the exercise secret and transfer process instead of Alice, the processes for transferring the holder and writer are reversed. 
The hash of the exercise secret on the chains must remain identical after Bob transfers his position to Dave.
A key challenge is ensuring that honest parties incur no loss by addressing potential misbehavior by any party and collusion between any two parties.
To counter these, we introduce a withdrawal delay to allow Dave to retrieve assets in case Bob acts maliciously and a transfer confirmation delay to allow Alice to contest any inconsistent replacement of the hashlock. 

\section{PROTOCOL}
\label{sec:details}
Due to the complexity of efficient holder collateral-free options, we will elaborate on the protocol gradually. We first introduce the efficient transfer of an option in \Cref{sec:ft-htlc}. Next, we outline how to achieve holder collateral-free cross-chain options in \Cref{sec:cfop}. Finally, we show the efficient, holder collateral-free option protocol.

\subsection{Efficient Option Transfer Protocol}
\label{sec:ft-htlc}
\mypara{Option Initialization} Firstly, we illustrate an efficient option transfer protocol in an HTLC-based option. Assume Alice and Bob initialize an HTLC-based option as the holder and writer, respectively.
In this option, Alice locks $Asset_A$ on $Chain_A$, intending to transfer it to Bob if a preimage of $H(A)$ is presented before $T_E+\Delta$.
Bob locks $Asset_B$ on  $Chain_B$, intending to transfer it to Alice if a preimage of $H(A)$ is presented before $T_E$, where $T_E$ is the expiration time of this option.
Alice owns the preimage $A$.

In addition, Alice performs $\text{KeyGen}(1^\lambda) \rightarrow (pk_A, sk_A)$, which acts as a \textit{transfer key} pair. This transfer key pair is used as a DAPS key pair for misbehavior detection, where $pk_A$ is recorded in both contracts.
The transfer key is used by Alice when transferring ownership to another party.
A signature generated by $sk_A$ can be used to replace the contract holder, the hashlock, and the new transfer public key. Similarly, Bob creates a transfer key and records it on chains.
Alice and Bob agree in advance on a value (e.g., a 256-bit random number) to serve as the message address $a$ recorded in the contracts for the DAPS.
We take the holder position transfer as an example to illustrate this transfer protocol.

\subsubsection{Transfer Holder's Position} 
\label{sec:ft-holder}
Suppose Alice reaches an agreement with Carol to transfer the holder position on or before time $T_H$, with a charge of $\texttt{HolderFee}$ on $Chain_C$.
Carol performs $\text{KeyGen}(1^\lambda) \rightarrow (pk_C, sk_C)$ to generate a new transfer key pair. Carol deposits $\texttt{HolderFee}$ in $Contract_C$.
This contract requires a signature of \sloppy $m=(a, p)$, where message payload $p = (\text{Carol}.address, H(C), pk_C)$, signed by $sk_A$ to unlock and transfer the $\texttt{HolderFee}$ to Alice.
$H(C)$ represents a new hashlock chosen by Carol, which is used to exercise the option if Carol obtains the holder position.
Besides, $Contract_C$ records $H(A)$, specifying that $\texttt{HolderFee}$ is refunded to Carol if Carol can reveal $A$ (meaning that Alice has exercised the option).
Instead of withdrawing immediately, after Alice reveals a signature to redeem $\texttt{HolderFee}$, she must wait for $3\Delta$ to elapse. We refer to this period as the \textit{Withdrawal Delay Period}.
The protocol consists of two phases, \Cref{fig:transfer_holder} illustrates the position transferring process:

\begin{enumerate}
    \item \textbf{Reveal Phase}: Carol locks the transfer fee and Alice attempts to withdraw the transfer fee with her signature.
    \item \textbf{Consistency Phase}: Carol forwards the signature to replace the holder and Alice withdraws the transfer fee after the withdrawal delay period.
\end{enumerate}

\noindent I. \textbf{Reveal Phase.}
\begin{enumerate}
    \item Alice generates a signature by $\text{Sign}(sk_A, m) \rightarrow \sigma_m$, where $m$ equals to $(a, (\text{Carol}.address,H(C),pk_C))$.
    \item If Alice wants to transfer option to Carol, Alice sends $\sigma_m$ in $\textit{Contract}_C$ by invoking the function \texttt{reveal()} and wait for $3\Delta$ (withdrawal delay period). If she abandons this transfer, she does not reveal $\sigma_m$. The $\texttt{HolderFee}$ will be refunded to Carol after $T_H$.
\end{enumerate}

\noindent II. \textbf{Consistency Phase.}
\begin{enumerate}
    \item Carol\footnote{Any party can forward this signature, as Alice may transfer to any party.} forwards $\sigma_m$ to both $Contract_A$ and $Contract_B$ directly, attempting to call the function \texttt{transferHolder()} to replace the holder to Carol, the hashlock to $H(C)$, and transfer public key to $pk_C$.
    \item Alice calls \texttt{withdraw()} to obtain the $\texttt{HolderFee}$ in $Contract_C$ after the withdrawal delay period.
\end{enumerate}

If all parties perform honestly, Alice is able to receive $\texttt{HolderFee}$ and the holder is changed to Carol. However, there are possible contingent events or dishonest scenarios:

\begin{itemize}
    \item If Alice exercises the option during the transfer process and reveals the preimage $A$ before $T_H$, Carol can refund the $\texttt{HolderFee}$ from $Contract_C$ using $A$ during the withdrawal delay period.
    \item If different signatures with the same message address $\sigma_{m'} \neq \sigma_m$, are submitted on $Chain_A$ and $Chain_B$ (e.g., if Alice submits two different signatures or sells the option to multiple parties), anyone can call $\text{Extract}(pk, m', \sigma_{m'}, m, \sigma_m) \rightarrow sk_A$ to get $sk_A$. $sk_A$ is the secret transfer key of Alice. Whoever gets $sk_A$ means that Alice misbehaves. We can use this as evidence for fair settlement of funds.
    \begin{itemize}
        \item Carol can call \texttt{reclaim()} and obtain $\texttt{HolderFee}$ with $sk_A$ during withdrawal delay period.
        \item Bob can use $sk_A$ to claim both $Asset_A$ and $Asset_B$. If a signature has not been submitted, Bob can claim it anytime.
        If a signature has been submitted, Bob sends $sk_A$ within one $\Delta$ after the signature submission.
    \end{itemize}
    \item If Carol reveals $\sigma_m$ on only $Contract_A$ or $Contract_B$, Bob can forward the signature to the other contract.
\end{itemize}

\textit{Timeouts.}
The transfer contract must be created no later than $T_H-3\Delta$, and the reveal phase should be completed by $T_H-2\Delta$ to ensure that the option can be transferred to Carol at $T_H$.
In the consistency phase, if any misbehavior occurs, it should be reported to the contract by $T_H+\Delta$.
If Bob does not claim assets on $Chain_A$ and $Chain_B$ with $sk_A$, then it implies transfer is complete.
Overall, a total transfer time of $4\Delta$ is required.
In other words, the transfer protocol must initiate no later than $T_E - 4\Delta$. The unlocking condition for $Contract_C$ is summarized in \Cref{tab:unlock_condition}.

\begin{table}[ht]
    \centering
    \caption{Unlocking conditions of transferring position from Alice to Carol, where $T$ and $T_R$ are current time and the time that Alice reveals the signature.}
    \begin{tabular}{c|c}
        \toprule
        \textbf{Recipient} & \textbf{Condition} \\
        \midrule
        Alice & $\text{Sign}(sk_A, (p_C,H(C),pk_C) \land T > T_R + 3\Delta$\\
        \midrule
        \multirow{3}*{Carol} & $sk_A \lor A$ \\
        ~ & $\text{Sign}(sk, m') \land m' \neq m$ \\
        ~ & $T \geq T_H-\Delta \land T_R = \bot $ \\
        \bottomrule
    \end{tabular}
    \label{tab:unlock_condition}
\end{table}

\begin{figure}
  \centering
  \includegraphics[width=\linewidth]{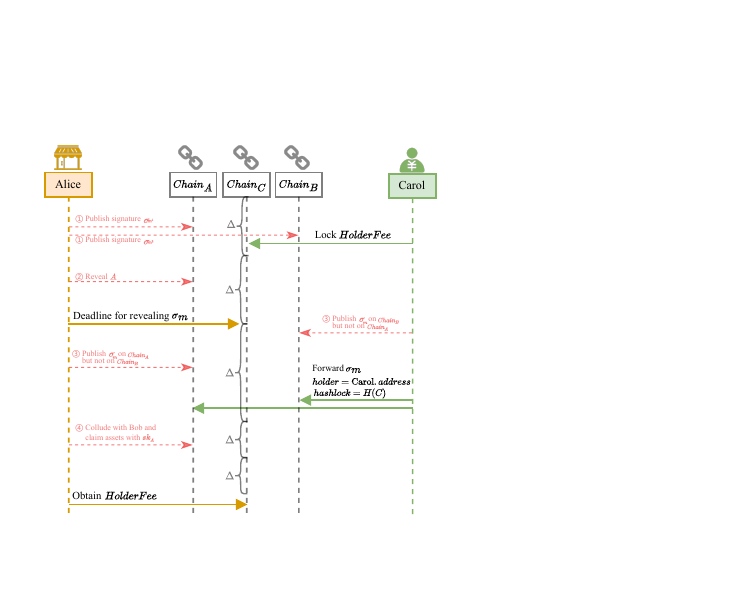}
  \caption{Alice transfers the holder's position to Carol. The red dashed lines represent malicious activities, illustrated in \Cref{sec:misbehavior}.}
  \label{fig:transfer_holder}
\end{figure}

\subsubsection{How misbehaviors are handled securely in the protocol}
\label{sec:misbehavior}
Here we show how this protocol handles misbehaviors and protects each party's interests by ensuring a fair payoff for honest parties in red dashed lines of \Cref{fig:transfer_holder}.
A more rigorous analysis is shown in the full version~\cite{peng_full_version} of this paper.
First, we consider each party acting maliciously on its own. 
\begin{itemize}
    \item If Alice provides two different signatures to different buyers, as shown in \textbf{\textcircled{1}}, Bob can extract $sk_A$ and submit it to obtain $Asset_A$ and $Asset_B$, and Carol can reclaim the transfer fee with $sk_A$.
    In that case, Bob does not lose his $Asset_B$, and Carol does not lose her transfer fee.
    \item If Alice reveals $A$ at the same time during the transfer process, as shown in \textbf{\textcircled{2}}, Carol can use $A$ to reclaim $\texttt{HolderFee}$.
    She does not lose anything. The option is exercised, and a swap happens between Alice and Bob.
    \item If Alice or Carol publishes one signature exclusively on either $Contract_A$ or $Contract_B$, as shown in \textbf{\textcircled{3}}, Bob can forward this signature to another chain to make sure the hashlocks and holders are consistent on two chains.
\end{itemize}

Next, we consider scenarios where collusion exists.
\begin{itemize}
    \item If Alice and Bob collude, they can use $sk_A$ or $A$ to withdraw $Asset_A$ and $Asset_B$ as shown in \textbf{\textcircled{4}}. Carol can observe $sk_A$ or $A$ and withdraw $\texttt{HolderFee}$ during the withdrawal delay period.
    \item If Alice and Carol collude, they use two signatures to change the hashlock. During the withdrawal delay period, Bob can obtain $Asset_A$ and $Asset_B$ using the extracted $sk_A$, which is reduced to \textbf{\textcircled{1}}.
    \item If Bob and Carol collude, they cannot do anything harm. Since Alice will only reveal one valid signature, Alice will receive $\texttt{HolderFee}$ from Carol.
\end{itemize}

\subsubsection{Transfer Writer's Position}
\label{sec:ft-writer}
Transferring the writer's position is similar but simpler because Bob does not possess the preimage of the hashlock. Bob, with the transfer key pair $(pk_B, sk_B)$, can sign the message $m = (a, (\text{Dave}.address, pk_D))$ using $sk_B$ to collect the transfer fee, where $pk_D$ is a new transfer key for Dave. Transferring the writer's position does not update the hashlock used in the option exercise. Thus, Alice's option is not influenced except for the change of the new option writer.

\subsection{Holder Collateral-Free Cross-Chain Options}
\label{sec:cfop}

\begin{figure}
  \centering
  \includegraphics[width=\linewidth]{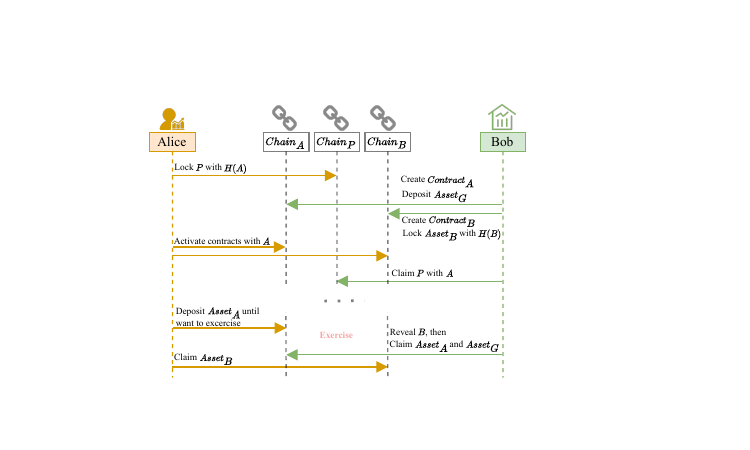}
  \caption{Collateral-Free Cross-Chain Swap Options if both Alice and Bob are honest, where Alice generates $A$ and $H(A)$, Bob generates $B$ and $H(B)$, then they exchange $H(A)$ and $H(B)$.}
  \label{fig:option_honest}
\end{figure}

We want to remove the need for upfront collateral from Alice without using a cross-chain bridge. Allowing Alice direct access to the exercise secret risks Bob's asset since Alice has no collateral. To address this, we resort to economic incentives and let Bob control the exercise secret while Alice retains the right to penalize Bob. In addition to the usual collateral, Bob locks a valuable asset on $Chain_A$ as a guarantee. If Bob fails to release the exercise secret when Alice exercises her right, she receives Bob's guarantee as compensation, incentivizing Bob to cooperate.

Suppose Alice and Bob reach an agreement that Alice pays Bob $P$ as a premium on a chain denoted by $Chain_P$. The option takes effect at $T_A$ meaning that Alice obtains the right to exchange $Asset_A$ for $Asset_B$ before $T_E$. Bob's guarantee is $Asset_G$.
The protocol involves two kinds of asset settlement: first for option establishment (or activation, we use them interchangeably) and second for option exercise. We therefore introduce two secrets: (1) \textit{Activation} secret $A$, used for Alice to pay the premium and activate the option; and (2) \textit{Exercise} secret $B$, used for Alice to pay $Asset_A$ to Bob in exchange for \( Asset_B \) when the option is exercised. Secret $A$ is generated by Alice and $B$ is generated by Bob.

The protocol is divided into two phases. \Cref{fig:option_honest} shows the execution process of this protocol if both parties are honest.
\begin{enumerate}
    \item \textbf{Setup phase}: Alice and Bob activate an option. Alice obtains option and Bob obtains premium.
    \item \textbf{Exercise/Abandon phase}: Alice can either exercise the option or abandon it.
\end{enumerate}

In the setup phase, Alice and Bob will establish this option similarly to a vanilla HTLC.
Alice locks $P$ with a hashlock $H(A)$ in a contract on any chain.
Bob creates two contracts, $Contract_A$ on $Chain_A$ and $Contract_B$ on $Chain_B$, which are used in the option.
The option remains inactive until Alice reveals the activation secret $A$ before $T_A$, at which point the state updates to active and Bob gets Alice's premium.
$Contract_A$ holds Bob's guarantee, $Asset_G$, until the option expires.
If Alice exercises the option and Bob fulfills his obligation by revealing the exercise secret $B$, $Asset_G$ is refunded to Bob.
If Bob fails to fulfill his obligation, $Asset_G$ will be transferred to Alice.
$Contract_B$ holds Bob's collateral, $Asset_B$, with hashlock $H(B)$.

\noindent I. \textbf{Setup Phase.}
\begin{enumerate}
    \item Alice randomly selects a secret $A$ as the activation secret, and computes its hash value $H(A)$. Bob generates $B$ and $H(B)$, which serve as the exercise secret and hashlock.
    \item Alice locks $P$ with hashlock $H(A)$ on the agreed-upon $Chain_P$ with a timeout $T_A+\Delta$.
    \item If Bob observes that Alice has honestly deposited the premium, Bob should, at any time before $T_A-\Delta$:
    \begin{enumerate}
        \item Create $Contract_A$ on $Chain_A$ and $Contract_B$ on $Chain_B$. These contracts are initially in an inactive state, and record the holder and writer, activation time $T_A$ and option expiration time $T_E$.
        \item Escrow the guarantee $Asset_G$ on $Chain_A$, and lock principal $Asset_B$ on $Chain_B$ with hashlock $H(B)$.
    \end{enumerate}
    \item If Alice observes that Bob has created contracts and made deposits, Alice reveals $A$ at $T_A$ on both chains to activate the option. If not, transaction aborts, Bob calls \texttt{refund()} and retrieves $Asset_G$ and $Asset_B$. Alice refunds the premium $P$.
\end{enumerate}

\noindent II. \textbf{Exercise/Abandon Phase.}
\begin{enumerate}
    \item \textit{Exercise}: If Alice wants to exercise the option at $T_B$ before expiration, she calls \texttt{exercise()} and deposits $Asset_A$ into $Contract_A$, then within one $\Delta$:
    \begin{enumerate}
        \item If Bob reveals $B$ on $Contract_A$, then he obtains both $Asset_A$ and $Asset_G$. Upon observing $B$, Alice obtains $Asset_B$ with $B$ from $Contract_B$.
        \item If Bob does not reveal $B$, Alice calls \texttt{claim()} on $Contract_A$ after $T_B+\Delta$ to receive $Asset_G$ as compensation.
    \end{enumerate}
    \item \textit{Abandon}: If Alice does not call \texttt{exercise()} before or at $T_E$, then the option is abandoned and Bob can call \texttt{refund()} on $Contract_A$ and $Contract_B$ to refund $Asset_G$ and $Asset_B$.
\end{enumerate}

\textit{Timeouts.}
The latest deadline $T_B$ is no later than $T_E$. If Bob fails to fulfill his obligations, then Alice receives $Asset_G$ by $T_E+2\Delta$.
Therefore, the lock period for $Asset_G$ in $Contract_A$ is $T_E+\Delta$ if Alice waives the option, or extends to $T_E+2\Delta$ if Alice exercises the option.
Alice exercises the option and receives \(Asset_B\) by \(T_E + 2\Delta\). Therefore, the lock period for \(Asset_B\) in $Contract_B$ is \(T_E + 2\Delta\).

\subsubsection{Integration: Efficient Cross-Chain Options without Upfront Holder Collateral}
\label{sec:ftcop}
We incorporate the efficient option transfer protocol to enable a collateral-free option transfer process. From the option transfer perspective, the roles of the holder and writer are reversed, as Bob owns the exercise secret. Bob deposits $Asset_G$ and $Asset_B$ in $Contract_A$ and $Contract_B$. In the transfer of Bob's position, hashlock $H(B)$ must remain consistent.

Take Bob transferring the writer's position to Dave as an example. It is similar to the \Cref{sec:ft-holder} with three notable differences. Suppose Bob reaches an agreement with Dave to transfer the writer position.
Dave is able to buy Bob's risky asset with its obligation at the price of $\texttt{WriterFee}$ before or at $T_W$.
Firstly, Dave should choose a new hashlock as the exercise secret.
Similarly, Bob needs to use his private key $sk_B$ to sign Dave's new hashlock $H(D)$, which means message $m = (a, (\text{Dave}.address, H(D), pk_D))$.
Secondly, Alice can use Bob's private key $sk_B$ to reclaim $Asset_B$ and guarantee, $Asset_G$.
Thirdly, if Alice wants to exercise the option and makes the deposit after Bob reveals the signature during the transfer process, the transfer continues, and Dave should forward the signature to obtain the writer's position.
Dave should fulfill his obligation and reveal the exercise secret at $T_W - \Delta$ on $Contract_A$.

\subsubsection{Support Concurrent Bidding and Defending Against Phantom Bid Attacks}
\label{sec:concurrent}
One notable feature of our proposed protocol is its ability to support concurrent bidding without increasing option transfer time.
For example, in a holder option transfer, assume there are $L$ holder position bidders, denoted as $Carol_i$, each willing to pay $\texttt{HolderFee}_i$ to obtain Alice's option position.
Each $Carol_i$ creates a contract, $Contract_C^i$, on $Chain_C^i$ and locks $\texttt{HolderFee}_i$ until Alice reveals her signature.
Alice can select one bidder $Carol_i$ to trade with and reveal her signature in $Contract_C^i$.
She will then receive the transfer fee she chooses and will not be subject to a fraudulent transaction that cannot be finalized.
Any option transfer is settled within a constant time window, regardless of the number of bidders.

Since the support for concurrent bidding, our protocol can effectively defend against \textit{phantom bid attack}. In the phantom bid attack, an adversary creates multiple virtual buyers who offer higher prices but do not finalize the transfer.
In TCO~\cite{transferable}, which attempts to transfer the option to a buyer sequentially, in the face of such an attack, the option holder/writer cannot sell their positions in a reasonable time since the virtual buyers are exhausting the option transfer window. 

With our proposed protocol, an adversary option buyer cannot launch this attack. This is due to the use of a signature for option transfer settlement, rather than a hashlock used in the previous protocol.
By this signature scheme, once a buyer is chosen by the seller, the option transfer can be finalized. There is no time window for the buyer to choose to finalize the option transfer or abort.  

\section{SECURITY ANALYSIS}
\label{sec:analysis} 
In this section, we analyze and prove the properties satisfied by our proposed protocol.
Firstly, recall from the previous illustration that Alice/Carol is called the \textit{holder}, and Bob/Dave is the \textit{writer}, and Carol or Dave is required to pay a \textit{transfer fee} to the corresponding seller.
The payment $P$ is called the \textit{premium}. $Asset_G$ deposited by Bob is called \textit{guarantee}. \textit{transfer fee} or any other asset escrowed in contracts is referred to as \textit{collateral}.

\subsection{Option Transfer Properties}
There are two fundamental properties that must be satisfied during the option transfer process.
\begin{itemize}
    \item \textit{Safety (No Underwater)}: A compliant party must not lose any collateral/position during the transfer process without acquiring a corresponding collateral/position.
    \item \textit{Liveness}: During the transfer of the holder or writer position, if all parties are conforming, the holder or writer will correctly transfer to the buyer, and the holder or writer will receive the correct transfer fee.
\end{itemize}

\begin{restatable}{theorem}{safety}
    \label{theorem:safety}
    Protocol \ref{sec:ftcop} satisfies safety (no-underwater):
    \begin{itemize}
        \item If Alice is conforming and loses her position, then she will obtain either Carol's collateral, Bob's collateral, and guarantee, or both.
        \item If Bob is conforming and loses his position, then he will obtain Dave's collateral.
        \item If Carol is conforming and loses her collateral, then Carol will obtain the holder position of the option.
        \item If Dave is conforming and loses his collateral, then Dave will obtain the writer position of the option.
    \end{itemize}
\end{restatable}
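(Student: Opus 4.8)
The plan is to prove each of the four bullets by a case analysis over the possible (mis)behaviors of the other parties, reusing the exhaustive case enumeration already sketched in Section~\ref{sec:misbehavior}, and leaning on the extraction property of DAPS as the central mechanism. The key structural observation is that a ``conforming'' party reveals at most one valid transfer signature and follows the prescribed timeouts; hence whenever that party's collateral or position leaves its control, the only enabling events are (i) a legitimate forwarded signature $\sigma_m$, (ii) disclosure of a conflicting signature $\sigma_{m'}$ with the same address $a$, which by $\text{Extract}$ yields the seller's secret transfer key $sk$, or (iii) disclosure of an exercise preimage. I would first set up this trichotomy as a lemma-like preamble, so that each of the four bullets reduces to checking that in each enabling event the conforming party has a matching claim available within the relevant time window.

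First I would handle Carol (third bullet), as it is the cleanest. If conforming Carol loses $\texttt{HolderFee}$ from $Contract_C$, the only unlock conditions (Table~\ref{tab:unlock_condition}) are: Alice's single valid signature, which simultaneously replaces the holder to Carol on $Contract_A$ and $Contract_B$ — so Carol gets the position; or a conflicting signature / leaked $sk_A$, in which case \texttt{reclaim()} returns the fee to Carol during the $3\Delta$ withdrawal delay, contradicting ``loses her collateral''; or the preimage $A$, in which case Carol uses \texttt{reclaim()} with $A$. Thus if she genuinely loses the fee, it was via the honest path and she holds the holder position. Dave (fourth bullet) is the analogous argument with $sk_B$ and $H(D)$, noting that the writer transfer carries no hashlock-consistency subtlety.

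Next I would do Bob (second bullet): Bob is conforming, so in the integrated protocol he controls the exercise secret $B$ and reveals at most one transfer signature. If he ``loses his position,'' i.e. the writer slot on $Contract_A$/$Contract_B$ is reassigned to Dave, then either a valid $\sigma_m$ from Bob's own key was forwarded (honest path: Dave's $\texttt{WriterFee}$ becomes claimable by Bob, and because Bob is conforming and timely he claims it), or a conflicting signature surfaced, which means Bob himself double-signed — excluded by conformance. Then Alice (first bullet): if conforming Alice loses her holder position, the enabling events are a forwarded $\sigma_m$ (she is the signer; by conformance she signed exactly one $m=(a,(\text{Carol}.address,H(C),pk_C))$ against exactly one $Contract_C^i$, so she can \texttt{withdraw()} that $\texttt{HolderFee}$ after the $3\Delta$ delay — this is where concurrent bidding matters: losing the position to the chosen Carol gives her that Carol's fee); or the extraction path, which requires Alice to have double-signed, excluded; or exercise-related flows in the collateral-free variant where, if Bob fails to reveal $B$, Alice's \texttt{claim()} on $Contract_A$ delivers $Asset_G$ by $T_E+2\Delta$, covering the ``Bob's collateral and guarantee'' clause.

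The main obstacle I expect is the interleaving of an exercise event with an in-flight transfer, together with two-party collusion — e.g.\ Alice and Bob colluding to use $sk_A$ (or $A$) to drain $Asset_A,Asset_B$ while a transfer to Carol is pending, or Bob colluding with Dave. The delicate point is verifying that the \emph{withdrawal delay period} ($3\Delta$), the \emph{transfer confirmation delay}, and the exercise deadline $T_B \le T_E$ with its $\Delta$-window are mutually consistent so that the honest bystander (Carol, or Alice contesting an inconsistent hashlock replacement) always has a strictly positive time window to observe the leaked secret and invoke \texttt{reclaim()}/\texttt{claim()} before the funds are irrevocably gone. I would discharge this by a careful timeline argument: list, for each collusion pair, the earliest time the colluders can produce the incriminating witness ($sk$ or a preimage) on-chain, and check it is no later than the start of the conforming party's claim window minus one $\Delta$ for propagation. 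The remaining cases (Bob–Carol or Bob–Dave collusion doing nothing harmful; one-sided signature posting fixed by the counterparty forwarding) are routine and I would dispatch them briefly by reference to Section~\ref{sec:misbehavior}.
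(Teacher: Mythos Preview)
Your overall strategy---case analysis over the enabling events for a conforming party's loss, with DAPS extraction as the central mechanism and the withdrawal delay as the timing guarantee---matches the paper's approach, and the trichotomy preamble you propose is a cleaner organizing device than the paper's own proof, which simply enumerates contract fields and then walks the cases. The timeline-consistency check you flag as the ``main obstacle'' is indeed the delicate point, and the paper handles it only implicitly.

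However, you have the complexity of the two transfer directions inverted. The theorem is about the \emph{integrated} protocol of Section~\ref{sec:ftcop}, where the roles are reversed relative to Section~\ref{sec:ft-htlc}: Bob, not Alice, owns the exercise secret, so it is the \emph{writer} transfer (Bob$\to$Dave) that replaces the hashlock $H(B)$ by $H(D)$ and therefore carries the two-chain consistency subtlety, while the \emph{holder} transfer (Alice$\to$Carol) is the simple one---Alice has no asset escrowed in $Contract_A$/$Contract_B$ and cannot touch the exercise hashlock. Your proposal treats Carol's case via Table~\ref{tab:unlock_condition} (which belongs to the unintegrated Section~\ref{sec:ft-htlc} setting, with preimage $A$ as a reclaim path that does not apply here), and then dismisses Dave's case as carrying ``no hashlock-consistency subtlety.'' In the paper's proof it is precisely Dave's bullet that requires the old-hashlock reclaim path ($B$ revealed mid-transfer) and the Alice--Bob collusion branch; and Alice's bullet is discharged not via an exercise-failure compensation argument but via Alice, as bystander to a malicious writer transfer, extracting $sk_B$ from conflicting signatures and claiming $Asset_G$ and $Asset_B$ (this is what the ``Bob's collateral and guarantee'' clause refers to). Swapping which direction you treat as ``complex'' and re-reading Section~\ref{sec:ftcop} for the correct unlock conditions will repair the argument; the structure you propose is otherwise sound.
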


\begin{proof}
See details in \Cref{sec:safety}.
\end{proof}

\begin{lemma}
\label{lemma:parti}
The holder transfer procedure of Protocol \ref{sec:ftcop} does not require Bob's participation.
\end{lemma}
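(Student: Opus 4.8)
The plan is to examine the holder transfer protocol of \Cref{sec:ft-holder} (as embedded in the integrated protocol of \Cref{sec:ftcop}) and verify that every contract action required to reach a correct terminal state can be triggered by a party other than Bob. Concretely, I would walk through the two phases of the holder transfer. In the \textbf{Reveal Phase}, the only actions are Carol depositing $\texttt{HolderFee}$ into $Contract_C$ and Alice invoking \texttt{reveal()} with $\sigma_m$; neither involves Bob. In the \textbf{Consistency Phase}, the honest-path actions are Carol forwarding $\sigma_m$ to $Contract_A$ and $Contract_B$ via \texttt{transferHolder()}, and Alice calling \texttt{withdraw()} after the withdrawal delay period. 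The key observation is the footnote already in \Cref{sec:ft-holder}: \emph{any} party can forward the signature $\sigma_m$, because forwarding only requires the publicly broadcast $\sigma_m$, not any secret held by Bob. So on the honest path Carol (or indeed anyone) can complete the transfer, and Bob's participation is never needed.

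Next I would argue that Bob's involvement is likewise dispensable on every contingency branch, since the lemma should hold unconditionally. The branches listed after the protocol are: (i) Alice reveals preimage $A$ during transfer — then Carol alone calls the refund path on $Contract_C$ using $A$; (ii) two colliding signatures appear — then Carol alone calls \texttt{reclaim()} with the extracted $sk_A$, and while Bob \emph{may} use $sk_A$ to claim $Asset_A, Asset_B$, he is not \emph{required} to for the transfer accounting to be consistent; (iii) a signature is posted on only one of $Contract_A, Contract_B$ — here the protocol text says ``Bob can forward the signature,'' but by the same any-party argument Carol (or Alice, or any observer) can equally forward it, so this too needs no Bob. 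I would state explicitly that wherever the protocol description names Bob as the actor, it is as one permissible forwarder/claimant among many, and substituting Carol or an arbitrary third party preserves all guarantees established in \Cref{theorem:safety}.

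I would then close by noting the structural reason this works: the holder transfer is mediated entirely through $\sigma_m$, a DAPS signature that is public once revealed, and through the preimage/secret-key material that the interested recipient (Carol) already has incentive and ability to use. Bob's contracts $Contract_A$ and $Contract_B$ accept \texttt{transferHolder()} from any caller presenting a valid $\sigma_m$, so Bob is a passive party during holder transfer. The main obstacle — really the only subtlety — is making sure no \emph{timeout-sensitive} action is exclusively Bob's: I would check that the withdrawal delay period ($3\Delta$) and the deadlines $T_H - 3\Delta$, $T_H - 2\Delta$, $T_H + \Delta$ can all be met by Alice and Carol acting alone, so that Bob's absence cannot cause a timeout to expire against an honest party. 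Once that timing check passes, the lemma follows immediately from the any-party forwarding property.
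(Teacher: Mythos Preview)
Your plan walks through the mechanics of \Cref{sec:ft-holder}, but the lemma is stated for the \emph{integrated} protocol of \Cref{sec:ftcop}, and the integration reverses the roles: Bob, not Alice, owns the exercise secret. Consequently, the holder transfer in \Cref{sec:ftcop} is the \emph{simple} variant (analogous to the writer transfer of \Cref{sec:ft-writer}): the message Alice signs is $(a,(\text{Carol}.address, pk_C))$, with \emph{no} hashlock replacement, and there is no $Asset_A$ of Alice's sitting in $Contract_A$ at this stage. Several of your contingency branches therefore do not exist in the protocol the lemma refers to: there is no hashlock $H(C)$ to keep consistent, and Bob has no ability to claim $Asset_A,Asset_B$ with an extracted $sk_A$ (the paper's own proof says explicitly that Bob \emph{cannot} use $sk_A$). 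So the operational walkthrough you propose is being carried out on the wrong instance of the transfer sub-protocol.

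The paper's argument is shorter and structural: because Alice does not own the exercise secret, the holder transfer only swaps the holder address and the holder's transfer public key; even if the two chains end up with inconsistent holders, Bob's hashlock $H(B)$ is untouched and his interests are unaffected. Hence Bob has nothing to protect and nothing to do. Your any-party-can-forward idea is fine as a supporting observation, but it is not what carries the proof here; the real reason Bob is dispensable is that the one-chain/other-chain inconsistency that motivated Bob's protective role in \Cref{sec:ft-holder} simply cannot harm him in the integrated setting. If you want to keep the operational style, rerun your phase-by-phase check on the simple (no-hashlock-change) transfer and drop contingencies (i)--(iii); what remains is essentially the paper's proof.
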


\begin{proof}
Evidently, Alice does not own the exercise secret, holder's transfer is required to replace the holder address and transfer public key, and the inconsistency of two chains will not harm the interest of Bob.
According to the Protocol \ref{sec:ft-htlc}, Bob cannot use the transfer private key of Alice, i.e. $sk_A$ to claim assets.
Therefore, during the reveal phase and consistency phase, Bob is not required to participate and is not allowed to make any changes on $Contract_A$ and $Contract_B$.
\end{proof}

\begin{lemma}
\label{lemma:alice_parti}
If Bob and Dave are conforming, then the writer transfer procedure of Protocol \ref{sec:ftcop} does not require Alice's participation.
\end{lemma}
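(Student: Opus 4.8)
The plan is to argue, analogously to Lemma \ref{lemma:parti}, that Alice is neither required nor permitted to act on $Contract_A$ or $Contract_B$ during Bob's writer transfer to Dave, provided Bob and Dave follow the protocol. First I would recall from the integration in \Cref{sec:ftcop} that the writer transfer mirrors the holder transfer of \Cref{sec:ft-holder}: Dave locks $\texttt{WriterFee}$ in a contract requiring a $sk_B$-signature on $m=(a,(\text{Dave}.address,H(D),pk_D))$, Bob reveals this signature in the reveal phase, and then the signature is forwarded to $Contract_A$ and $Contract_B$ to replace the writer address, the hashlock $H(B)\to H(D)$, and the transfer public key $pk_B\to pk_D$. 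The only on-chain state changes in a conforming execution are these writer-side updates, none of which is Alice's to make.

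Next I would check that Alice has no incentive-driven reason to intervene, i.e. that her interests are untouched by a conforming writer transfer. The hashlock governing her exercise right is changed from $H(B)$ to $H(D)$, but since Dave is conforming he chooses $H(D)$ consistently on both chains and stands ready to reveal $D$ when Alice exercises (by $T_W-\Delta$ on $Contract_A$, as specified in \Cref{sec:ftcop}); Alice's guarantee-backed exercise guarantee is thus preserved, with $D$ playing the role $B$ did. Her premium has already been settled in the setup phase and is not involved. If Alice does exercise mid-transfer, the protocol says the transfer simply continues and Dave forwards the signature — again no action by Alice beyond the ordinary \texttt{exercise()} deposit she would make anyway, which is part of the option, not of the transfer. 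So there is no transfer step that calls for Alice.

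Then I would argue Alice is in fact excluded from making transfer-related changes: by Protocol \ref{sec:ft-htlc}, the contracts only honor a $sk_B$-signature (or the misbehavior witnesses $sk_B$ / a colliding signature pair) to update the writer slot, and Alice does not hold $sk_B$; the reclaim path using $sk_B$ is reserved for the counterparty and is triggered only by Bob's misbehavior, which is excluded by hypothesis. Hence during both the reveal and consistency phases Alice neither needs to act nor can effect any writer-side change, which is exactly the claim. The main obstacle is being precise about the ``mid-transfer exercise'' branch: I must confirm that Alice's \texttt{exercise()} deposit and subsequent \texttt{claim()}/collection of $Asset_B$ are accounted for as option operations rather than transfer operations, so that asserting ``no participation in the transfer procedure'' is not contradicted — the resolution is that these actions are available and identical whether or not a transfer is in progress, so they are not part of the writer transfer procedure itself.
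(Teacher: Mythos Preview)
Your proposal is correct and follows essentially the same approach as the paper: the paper's proof is a one-liner observing that an honest Bob will not leak two signatures or $sk_B$ and an honest Dave will submit $\sigma_m$ on both $Contract_A$ and $Contract_B$, so Alice's only potential actions (forwarding a stray signature or reclaiming with an extracted $sk_B$) are never triggered. Your treatment is considerably more detailed---particularly the discussion of the mid-transfer exercise branch and the distinction between option operations and transfer operations---but the core argument is the same.
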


\begin{proof}
Obviously, honest Bob will not leak two signatures or $sk_B$ and honest Dave will submit signature $\sigma_m$ on both $Contract_A$ and $Contract_B$, Alice only needs to make operations when there is any dishonest party.
\end{proof}

\begin{restatable}{theorem}{correctness}
\label{theorem:correctness}
Protocol \ref{sec:ftcop} satisfies liveness: If Alice, Bob, and Carol/Dave are conforming, then Alice/Bob will obtain Carol/Dave's collateral, Carol/Dave will obtain Alice/Bob's position, and Bob/Alice will retain their original position.
\end{restatable}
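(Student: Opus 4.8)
The plan is to prove this happy-path claim directly, by tracing the transfer sub-protocol of Protocol~\ref{sec:ftcop} under the hypothesis that every party follows its prescribed steps, and by treating the holder transfer (Alice to Carol) and the writer transfer (Bob to Dave) as two symmetric cases — recalling that Protocol~\ref{sec:ftcop} swaps the roles of holder and writer relative to \Cref{sec:ft-htlc}, since here the writer owns the exercise secret. For each case I would establish three things: the seller collects the buyer's transfer fee, the buyer is installed as the new holder/writer with the new hashlock and transfer public key on both $Contract_A$ and $Contract_B$, and the third party's position and escrowed assets are left untouched.

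For the holder transfer I would run the two phases of \Cref{sec:ft-holder} in order. In the reveal phase, conforming Carol locks \texttt{HolderFee} in $Contract_C$, and conforming Alice computes $\sigma_m=\text{Sign}(sk_A,(a,(\text{Carol}.address,H(C),pk_C)))$ and calls \texttt{reveal()}. Because Alice is honest she signs only this one message on address $a$, and she does not exercise during the window, so no colliding signature and no preimage $A$ is ever published; consequently none of the ``Carol-recipient'' refund branches of \Cref{tab:unlock_condition} can fire. In the consistency phase Carol forwards $\sigma_m$ to $Contract_A$ and $Contract_B$; both verify it against the recorded $pk_A$ via $\text{Verify}$, so \texttt{transferHolder()} installs $(\text{Carol},H(C),pk_C)$ on each chain. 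After the $3\Delta$ withdrawal delay Alice calls \texttt{withdraw()}; the condition $T>T_R+3\Delta$ of \Cref{tab:unlock_condition} now holds and, with no competing claim present, she receives \texttt{HolderFee}. By \Cref{lemma:parti} none of this needs Bob, and since no transaction in this run ever touches $Asset_B$ or $Asset_G$, Bob retains his writer position. The writer transfer is the mirror image: Bob signs $m=(a,(\text{Dave}.address,H(D),pk_D))$ with $sk_B$, Dave forwards it to both contracts and becomes the writer, the exercise hashlock is updated consistently to $H(D)$ on $Chain_A$ and $Chain_B$ so Alice's right is preserved, Bob collects \texttt{WriterFee} after the delay, and by \Cref{lemma:alice_parti} Alice need not participate.

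To close the argument I would assemble the timing inequalities from the \emph{Timeouts} paragraphs: the transfer contract is created by $T_H-3\Delta$ (resp.\ $T_W-3\Delta$), the reveal phase completes by $T_H-2\Delta$, the consistency phase together with the $3\Delta$ withdrawal delay fits into the remaining window so the withdrawal succeeds by $T_H+\Delta$, and the entire $4\Delta$ transfer window must start no later than $T_E-4\Delta$; hence every honest action lands strictly before its deadline and the corresponding \texttt{transferHolder()}/\texttt{withdraw()} calls are accepted.

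The main obstacle I expect is not the happy-path mechanics but the side conditions needed to make them go through cleanly. I must check that under full honesty none of the misbehavior-triggered branches in \Cref{tab:unlock_condition} and \Cref{sec:misbehavior} — extraction of $sk_A$ or $sk_B$ from colliding signatures, an early reveal of the preimage, or a single-chain submission — can be triggered even inadvertently, and I must argue that the interleaving explicitly permitted in \Cref{sec:ftcop}, where Alice may deposit $Asset_A$ to exercise while Bob's writer transfer is still in flight, still yields a consistent final state in which Dave is the writer and every fee and position lands where the theorem claims. Pinning down that interaction, rather than the linear sequence of honest calls, is where the argument needs the most care.
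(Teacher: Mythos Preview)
Your proposal is correct and follows essentially the same approach as the paper: trace the honest-path execution of the two phases for each of the holder and writer transfers, invoke \Cref{lemma:parti} and \Cref{lemma:alice_parti} to argue the counterparty need not act, and check that the timing constraints from the \emph{Timeouts} paragraphs are met. Your treatment is in fact more careful than the paper's own short proof, which simply walks the same steps without explicitly arguing that the refund branches of \Cref{tab:unlock_condition} cannot fire under honesty.

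One remark: the interleaving concern you flag at the end --- Alice depositing $Asset_A$ to exercise while Bob's writer transfer is in flight --- is not part of what this liveness theorem asserts. The paper factors that scenario out into the separate \emph{exercisability} theorem (\Cref{theorem:opcor} and the theorem immediately after it in \Cref{sec:option_proof}); for the present statement it suffices to treat the transfer in isolation, as the paper does. So the ``main obstacle'' you anticipate is real but belongs to a different result, and you can safely omit that analysis here.
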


\begin{proof}
By \Cref{lemma:parti}, Bob's participation is not required during the holder transfer.
If Alice and Carol are conforming, Carol creates $Contract_C$ contract and lock her collateral using the signature of Alice before $T_H-3\Delta$.
Alice will then reveal the signature by $sk_A$ and call $reveal()$ on $Contract_C$ at $T_H-2\Delta$. An honest Carol will forward the signature, setting the holder to Carol.
Alice can then wait for $3\Delta$ withdrawal delayed period to obtain the collateral, while the writer of $Contract_A$ and $Contract_B$ is still Bob, Bob maintains the writer's position.
During the process where Bob transfers his position to Dave, if both parties are conforming, Bob will not expose two different signatures.
After $T_W + \Delta$, Bob will not be obstructed and will obtain Dave's collateral.
Meanwhile, Dave can submit $\sigma_m$ between $T_W-\Delta$ and $T_W$ to $Contract_A$ and $Contract_B$ to change the writer, Alice retains the holder position.
\end{proof}

\Cref{theorem:safety} guarantees that an honest party will not incur losses in the transfer protocol, even if another party irrationally forfeits their assets, thus protecting the interests of contract adherents.
\Cref{theorem:correctness} ensures that if the transfer proceeds correctly, all participants receive the intended outcomes.

Another three properties must be ensured during the transfer process: \textit{Unobstructibility}, \textit{Independence}, and \textit{Isolation}.
First, external interferences cannot disrupt the transfer by the conforming parties. Second, after a successful transfer, the holder or writer is updated to another party, ensuring that the transfer does not affect future transactions.
Finally, the transfer process for the holder and the writer must be isolated, with no mutual influence.
We list theorems related to other properties, with proofs in \Cref{sec:trans_proofs}. 

\begin{itemize}
    \item \textit{Unobstructibility}: If both the buyer and seller are conforming, no other party can obstruct the transfer process.
    \item \textit{Independence}: After a successful transfer, subsequent transfers can proceed normally, and the previous position holder cannot interfere with future transfers.
    \item \textit{Isolation}: The processes of transferring options by the holder and writer to buyers can occur simultaneously and separately, without interference from each other.
\end{itemize}

\begin{restatable}{theorem}{unobstructibility}
\label{theorem:unobstructibility}
Protocol \ref{sec:ftcop} satisfies transfer unobstructibility: Alice/Bob can transfer the position to another party even if Bob/Alice is adversarial.
\end{restatable}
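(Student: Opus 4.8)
The plan is to prove \Cref{theorem:unobstructibility} by arguing separately for the holder transfer (Alice selling to Carol while Bob is adversarial) and the writer transfer (Bob selling to Dave while Alice is adversarial), and in each case tracing through every action an adversarial counterparty could attempt and showing it cannot block the transfer. The backbone of the argument is already available: \Cref{lemma:parti} states that the holder transfer requires no participation from Bob, and \Cref{lemma:alice_parti} states that the writer transfer requires no participation from a conforming Alice; the content of unobstructibility is the stronger claim that even an \emph{actively adversarial} Bob (resp.\ Alice) cannot interfere.

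For the holder transfer, I would proceed as follows. First, observe that the only contract actions relevant to completing the transfer are: Carol creating $Contract_C$ and locking $\texttt{HolderFee}$; Alice calling \texttt{reveal()} with $\sigma_m = \text{Sign}(sk_A, (a,(\text{Carol}.address,H(C),pk_C)))$; the forwarding of $\sigma_m$ to $Contract_A$ and $Contract_B$ via \texttt{transferHolder()}; and Alice calling \texttt{withdraw()} after the $3\Delta$ withdrawal delay. I would then enumerate Bob's possible interventions and show each is harmless: (i) Bob cannot produce a colliding signature $\sigma_{m'}$ with the same address $a$, since by the unforgeability of DAPS he does not hold $sk_A$ and Alice (conforming) reveals only one signature, so the \texttt{reclaim()}/\texttt{Extract} branch never triggers against her; (ii) Bob cannot withdraw $\texttt{HolderFee}$ from $Contract_C$, since its unlocking conditions (\Cref{tab:unlock_condition}) only name Alice (with a valid signature after $T_R+3\Delta$) or Carol (with $sk_A$, $A$, a collision, or a timeout), never Bob; (iii) Bob forwarding $\sigma_m$ to $Contract_A$ or $Contract_B$ only helps, since it is exactly the action the protocol expects and it sets the holder to Carol with hashlock $H(C)$ — and it is idempotent, so Bob cannot ``corrupt'' the replacement; (iv) Bob cannot submit a stale or malformed message to \texttt{transferHolder()} because the contract verifies the DAPS signature against the recorded $pk_A$. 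Since the protocol's timeouts ($Contract_C$ created by $T_H-3\Delta$, reveal by $T_H-2\Delta$, misbehavior reporting by $T_H+\Delta$) are driven entirely by Alice and Carol, Bob cannot stall them. Hence the holder transfer completes regardless of Bob.

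For the writer transfer, the argument is symmetric but must additionally handle the shared hashlock $H(B)$ and Alice's exercise window. Here Bob signs $m = (a,(\text{Dave}.address, H(D), pk_D))$ with $sk_B$, Dave locks $\texttt{WriterFee}$, and Dave forwards the signature to update the writer to Dave and the exercise hashlock to $H(D)$ on both chains. An adversarial Alice's only levers are: (i) exercising the option mid-transfer by depositing $Asset_A$ into $Contract_A$ — but by the integration description in \Cref{sec:ftcop}, in that case the transfer simply continues and Dave forwards the signature and then fulfills the obligation by revealing the exercise secret at $T_W-\Delta$, so Alice's exercise does not abort the transfer; (ii) attempting to withdraw $Asset_B$ or $Asset_G$ using $sk_B$ — impossible, since Alice does not hold $sk_B$ and a conforming Bob reveals only one signature, so the extraction branch is never satisfied; (iii) refusing to cooperate at all — irrelevant, since by \Cref{lemma:alice_parti} the writer transfer does not need Alice. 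The withdrawal delay and transfer confirmation delay introduced in the integration exist precisely to let a \emph{conforming} party contest misbehavior; they cannot be weaponized by Alice because every contestation branch requires evidence of Bob's misbehavior ($sk_B$ or a collision), which does not exist when Bob is conforming.

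The main obstacle I anticipate is case (i) in the writer transfer — carefully reconciling the interleaving of Alice's \texttt{exercise()} call with Dave's forwarding of the signature, and checking that the timeouts still line up so that Dave can both become the writer \emph{and} reveal the exercise secret by $T_W-\Delta$ before Alice's claim deadline $T_B+\Delta$. I would need to verify explicitly that $T_W$ is chosen early enough relative to $T_E$ (the integration requires the transfer to start by $T_E - 4\Delta$, echoing the $4\Delta$ budget from \Cref{sec:ft-holder}) so that the exercise-during-transfer branch does not overrun the option's own deadlines; once that timing bookkeeping is pinned down, the rest of the proof is a routine enumeration against the unlocking conditions. A secondary subtlety is making sure "unobstructibility" is not vacuous: I should note that an adversarial counterparty \emph{can} of course choose to lose its own assets (e.g.\ Bob leaking $sk_A$), but by \Cref{theorem:safety} that only benefits the conforming parties and still does not prevent the transfer from being finalized by the seller — so the theorem holds in the strong sense intended.
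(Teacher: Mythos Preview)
Your proposal is correct and follows essentially the same approach as the paper: invoke \Cref{lemma:parti} for the holder-transfer direction and, for the writer-transfer direction, observe that Alice's only levers against a conforming Bob are the extraction and exercise-secret branches, neither of which can be triggered since Bob signs exactly once and does not leak $B$ or $sk_B$. The paper's own proof is only a few sentences and does not enumerate cases or discuss the exercise-during-transfer timing at all; your version is a strict elaboration of the same skeleton. One small slip in your final paragraph: the parenthetical ``e.g.\ Bob leaking $sk_A$'' is off, since Bob never holds $sk_A$; presumably you meant Bob leaking $B$ or $sk_B$, but this aside is not load-bearing.
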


\begin{restatable}{theorem}{sbindependence}
Protocol \ref{sec:ftcop} satisfies transfer independence: After $Alice_i$/$Bob_j$ transfers to $Alice_{i+1}$/$Bob_{j+1}$, the new position owner can subsequently transfer to $Alice_{i+2}$/$Bob_{j+2}$ without interference from any adversarial party.
\end{restatable}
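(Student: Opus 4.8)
The plan is to reduce the $(i{+}1)\to(i{+}2)$ transfer to a \emph{fresh} instance of the transfer protocol of \Cref{sec:ft-htlc}/\Cref{sec:ftcop} whose seller role is held by $Alice_{i+1}$ (resp. $Bob_{j+1}$) and in which no earlier owner retains any authority, and then invoke \Cref{theorem:correctness} and \Cref{theorem:unobstructibility} for that instance. First I would record the post-transfer state: once $Alice_i$ has transferred to $Alice_{i+1}$, by the consistency phase together with the no-underwater guarantee of \Cref{theorem:safety}, both $Contract_A$ and $Contract_B$ have their holder field set to $Alice_{i+1}$, their hashlock set to $H(C_{i+1})$ for a secret $C_{i+1}$ chosen by $Alice_{i+1}$, and their recorded transfer public key set to $pk_{i+1}$, where $(pk_{i+1},sk_{i+1})\leftarrow\text{KeyGen}(1^\lambda)$ was drawn by $Alice_{i+1}$ and $sk_{i+1}$ is known only to her except with probability negligible in $\lambda$. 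In particular the previous key $pk_i$ is no longer referenced by either contract.

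Next I would argue that the only party able to initiate a further transfer is $Alice_{i+1}$: the \texttt{transferHolder()} branch now accepts a payload only under a signature that verifies against $pk_{i+1}$, and by unforgeability of DAPS no party lacking $sk_{i+1}$ can produce such a signature. Consequently $Alice_i$ (or any earlier owner, or any third party) cannot move the position: an old signature under $sk_i$ is rejected by the updated contracts; the already-consumed transfer signature $\sigma_{m_{i+1}}$ is a no-op, since the holder/hashlock/key fields already reflect it and are identical on both chains; and because $C_{i+1}$ is $Alice_{i+1}$'s private choice, no earlier party can trigger the preimage branches of \Cref{tab:unlock_condition} either. The symmetric statements for the writer side follow the same lines using the key $(pk_{j+1},sk_{j+1})$ of $Bob_{j+1}$, with the additional simplification (\Cref{sec:ft-writer}) that a writer transfer never rewrites the exercise hashlock, so a chain of writer transfers cannot even in principle disturb the holder's option. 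Having shown that $Alice_{i+1}$ is the sole seller-capable party and that the contract state is a valid starting configuration, \Cref{theorem:correctness} and \Cref{theorem:unobstructibility} apply verbatim to the $(i{+}1)\to(i{+}2)$ round; an induction on $i$ (resp. $j$) then yields independence along the entire transfer chain.

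The step I expect to be the main obstacle is ruling out a \emph{cross-round} DAPS interaction: the message address $a$ may be reused across successive transfers, so one must be sure that a signature by $Alice_i$ under $sk_i$ and a signature by $Alice_{i+1}$ under $sk_{i+1}$ on that common address but with differing payloads cannot be combined---e.g. fed to $\text{Extract}$---to leak $sk_{i+1}$ or otherwise undermine the new owner. The resolution is that DAPS extraction is key-local: $\text{Extract}$ returns a secret key only from two colliding messages signed under the \emph{same} public key, and the protocol rotates the transfer key on every successful round, so honest parties never produce two colliding signatures under a single key across rounds; the sole way the ``same key, same address, different payload'' precondition can be met is a single (possibly corrupted) seller double-signing within one round, which is exactly the case already dispatched by the single-round analysis behind \Cref{theorem:safety}. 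A lesser subtlety, handled by the $4\Delta$ timing and the consistency phase, is insisting that a round counts as complete only once both chains agree on the new $(\text{holder},H(C),pk)$ triple, so that the next round starts from a well-defined, chain-consistent configuration.
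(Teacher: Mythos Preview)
Your proposal follows essentially the same line as the paper's proof: after a successful transfer both contracts record a fresh transfer public key whose secret is known only to the new owner, so the next round is just a re-instantiation of the transfer protocol with the new owner as seller; the paper states exactly this and stops, while you additionally reduce explicitly to \Cref{theorem:correctness} and \Cref{theorem:unobstructibility} and address the cross-round DAPS address-reuse question, which is a welcome elaboration but not a different route.

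One factual slip to fix: you have the hashlock-update roles inverted for the integrated protocol of \Cref{sec:ftcop}. In the holder-collateral-free setting it is the \emph{writer} who owns the exercise secret, so a writer transfer (Bob $\to$ Dave) \emph{does} replace the exercise hashlock with a fresh $H(D)$, whereas a holder transfer (Alice $\to$ Carol) does \emph{not} touch the hashlock (see the remark at the end of the safety proof in \Cref{sec:safety}). Your statements about $H(C_{i+1})$ being chosen by $Alice_{i+1}$ and about writer transfers ``never rewriting the exercise hashlock'' therefore need to be swapped. This does not undermine your independence argument, which rests entirely on rotation of the DAPS transfer key rather than on who rewrites the hashlock, but the side remarks should be corrected before this is incorporated.
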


\begin{restatable}{theorem}{counterparty}
\label{theorem:counterparty}
Protocol \ref{sec:concurrent} satisfies transfer isolation: Alice and Bob can simultaneously and separately transfer their positions to Carol and Dave, respectively. This means that the transferring holder and the transferring writer can proceed concurrently.
\end{restatable}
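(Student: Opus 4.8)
The plan is to prove transfer isolation as a \emph{non-interference} statement: running the holder transfer ($Alice\to Carol$) and the writer transfer ($Bob\to Dave$) in parallel yields, for every conforming participant, exactly the guarantees that each transfer provides on its own, so that \Cref{theorem:safety} (no-underwater), \Cref{theorem:correctness} (liveness) and \Cref{theorem:unobstructibility} (unobstructibility), together with \Cref{lemma:parti} and \Cref{lemma:alice_parti}, can be applied to each transfer separately and their conclusions conjoined. Concretely, I would show that each transfer's settlement depends only on a ``footprint'' of contract fields and cryptographic material that the other transfer never writes, so every on-chain step of the other transfer can be treated as an inert environment step in each case of the earlier proofs.

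First I would pin down the two footprints. The holder transfer writes only the \emph{holder address} and Alice's transfer public key $pk_A$ inside $Contract_A$ and $Contract_B$, escrows $\texttt{HolderFee}$ in the freshly created, independent $Contract_C$ on $Chain_C$, and its outcome is decided by $sk_A$, $pk_A$, the deadline $T_H$, and the data recorded in $Contract_C$; by \Cref{lemma:parti} it needs no action from Bob. The writer transfer writes only the \emph{writer address}, Bob's transfer public key $pk_B$, and the exercise hashlock (replacing $H(B)$ by $H(D)$) inside $Contract_A$ and $Contract_B$, escrows $\texttt{WriterFee}$ in the independent $Contract_D$ on $Chain_D$, and its outcome is decided by $sk_B$, $pk_B$ and $T_W$; by \Cref{lemma:alice_parti} it needs no action from Alice when Bob and Dave conform. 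These write sets are disjoint, the deadlines only need to satisfy $T_H,T_W\le T_E-4\Delta$ (met independently), and under concurrent bidding each $Carol_i$ and each $Dave_j$ uses a private escrow $Contract_C^i$, $Contract_D^j$, so no step of one side clobbers a field the other reads for its own settlement or consumes the other's $4\Delta$ window.

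Second I would rule out cryptographic cross-talk: even if Alice and Bob share the DAPS message address $a$, \texttt{Extract} fires only on two colliding signatures produced \emph{under the same key}, while Alice signs holder payloads with $sk_A$ and Bob signs writer payloads with $sk_B$; moreover an $sk_A$-signature never verifies against the recorded $pk_B$ and vice versa, so a malicious forwarder cannot redirect one transfer's signature into the other's \texttt{transferHolder}/\texttt{transferWriter} or \texttt{reclaim} branch. With disjoint state and no crypto cross-talk, every branch of the misbehavior analysis of \Cref{sec:misbehavior} and of the proof of \Cref{theorem:safety} for one transfer survives the other transfer's steps verbatim, and the liveness/unobstructibility arguments likewise survive; conjoining the per-transfer conclusions gives precisely that Alice and Bob can transfer to Carol and Dave simultaneously and separately.

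The step I expect to be the main obstacle is the single place where the footprints genuinely overlap: $Contract_C$ records the current exercise hashlock so that Carol may reclaim $\texttt{HolderFee}$ if the option is exercised while her transfer is still pending, yet a concurrent writer transfer replaces $H(B)$ by $H(D)$, after which an exercise reveals $D$ rather than $B$ and Carol's hard-coded reclaim clause would be stale (and symmetrically for $Contract_D$ and Dave). I would resolve this with a case analysis on the order in which the writer transfer's \texttt{transferWriter} call and Carol's creation of $Contract_C$ take effect --- either $Contract_C$ is pinned to the already-updated hashlock, or $Contract_C$ additionally records $pk_B$ and admits an update of its stored hashlock on presentation of Bob's (public) writer-transfer signature, so Carol's reclaim clause always tracks the live hashlock --- and then re-verifying that in every interleaving Carol obtains either the genuine holder position or a refund of $\texttt{HolderFee}$, never neither. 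Making this book-keeping airtight, rather than the disjointness argument, is the crux.
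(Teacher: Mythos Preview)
Your disjoint-footprint framework is cleaner than the paper's argument, but it has a gap that is precisely the point the paper's proof spends its effort on. You claim the write sets of the two transfers are disjoint and that ``no step of one side clobbers a field the other reads for its own settlement.'' For settlement narrowly construed that is true, but the \emph{safety} argument for the writer transfer is not settlement-only: it relies on \emph{the holder} acting as watchdog --- detecting Bob's double-signing, extracting $sk_B$, and claiming $Asset_G$ and $Asset_B$ within the $\Delta$ window. The identity ``holder'' is exactly the field the holder transfer rewrites. So the footprints are not disjoint in the sense you need: the writer transfer's misbehavior branches read a field the holder transfer writes. Your plan to ``apply \Cref{theorem:safety} to each transfer separately and conjoin'' therefore does not go through as stated, because the hypothesis of the writer-side safety case (a fixed, conforming holder) may be violated mid-run.

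The paper's proof handles exactly this interaction: it argues that when both transfers run concurrently and Bob (or Bob and Dave) misbehave, the incoming holder Carol can herself play the watchdog role --- forwarding a one-sided $\sigma_m$ to the other chain, or using the extracted $sk_B$ to claim $Asset_G$ and $Asset_B$ --- with a small case split on whether Carol has already completed her side of the holder transfer. That is the crux you are missing; your identified ``main obstacle'' about $Contract_C$ carrying a stale exercise hashlock is largely off-target in the integrated protocol of \Cref{sec:ftcop}, since there the holder does \emph{not} own the exercise secret and the holder transfer does not touch the hashlock at all (the roles are reversed relative to \Cref{sec:ft-holder}). If you want to salvage your non-interference structure, you should replace the bare disjoint-write claim with a rely/guarantee style statement: the writer transfer's safety holds for \emph{whichever conforming party is holder at each moment}, and then show that both Alice (before) and Carol (after) can execute the watchdog steps within the required windows.
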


\subsection{Option Properties}
In addition to the properties of the protocol during the option transfer process, we explore the properties of option contracts.

\begin{itemize}
    \item \textit{Option correctness}: If both the holder and writer are conforming, either the exercise does not occur, the holder does not lose their collateral and the writer does not lose their collateral and guarantee; or upon completion of the exercise, they will each receive the other's collateral, and the writer will reclaim their guarantee.
    
    \item \textit{Exercisablity}: During the transfer of the writer position, the option holder can exercise the option without experiencing any delays or obstructions.

    \item \textit{Failure compensation}: If the holder initiates the exercise before expiration, he will either successfully exercise the option or receive the pre-agreed compensation guarantee.
\end{itemize}

\begin{restatable}{theorem}{opcor}
\label{theorem:opcor}
Protocol \ref{sec:cfop} satisfies option correctness: If both Alice and Bob are conforming, then if Alice does not exercise the right, Alice doesn't lose $Asset_A$ and Bob doesn't lose $Asset_G$ and $Asset_B$; or if Alice exercises the right, then Alice will receive $Asset_B$ and Bob will receive $Asset_A$ and $Asset_G$.
\end{restatable}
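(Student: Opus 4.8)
The plan is to prove option correctness by a case analysis on the behavior of the honest parties, tracking the flow of $P$, $Asset_A$, $Asset_B$, and $Asset_G$ through the contracts $Contract_A$, $Contract_B$, and the premium contract on $Chain_P$. First I would establish the setup phase invariant: if both Alice and Bob are conforming, then either the option is never activated (Alice never reveals $A$ before $T_A$), in which case $P$ is refunded to Alice via her timeout $T_A+\Delta$ and Bob calls \texttt{refund()} to reclaim $Asset_G$ from $Contract_A$ and $Asset_B$ from $Contract_B$, leaving every party whole; or the option is activated, in which case Alice reveals $A$ at $T_A$ on both chains, Bob collects $P$ (which is the agreed premium, outside the scope of the no-loss claim since it is the price of the option), and the contracts transition to the active state holding $Asset_G$ and $Asset_B$. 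Since a conforming Bob only creates the contracts after observing Alice's honest premium deposit, and a conforming Alice only reveals $A$ after observing Bob's two deposits, consistency of the activated state across both chains follows.

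Next I would handle the exercise/abandon phase, splitting on Alice's choice. In the \emph{abandon} branch, Alice does not call \texttt{exercise()} by $T_E$, so $Asset_A$ never leaves Alice; the lock period for $Asset_G$ in $Contract_A$ is $T_E+\Delta$ and for $Asset_B$ in $Contract_B$ is $T_E+2\Delta$, so by the timeout rules Bob's \texttt{refund()} succeeds on both chains and Bob loses neither $Asset_G$ nor $Asset_B$ — this is exactly the first disjunct of the theorem. In the \emph{exercise} branch, Alice calls \texttt{exercise()} at some $T_B\le T_E$ and deposits $Asset_A$ into $Contract_A$. Here I would use the fact that a conforming Bob, seeing the deposit, reveals $B$ on $Contract_A$ within one $\Delta$; this simultaneously lets Bob claim $Asset_A$ and reclaim $Asset_G$ from $Contract_A$, and lets Alice, upon observing $B$ on-chain, redeem $Asset_B$ from $Contract_B$ using $B$ before the $T_E+2\Delta$ timeout. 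So Alice receives $Asset_B$, Bob receives $Asset_A$ and $Asset_G$ — the second disjunct.

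The main obstacle I anticipate is the timing argument in the exercise branch: I must verify that the $\Delta$-window in which Bob reveals $B$, plus the time Alice needs to observe $B$ and submit her claim on $Chain_B$, all fits before the $Asset_B$ lock expires at $T_E+2\Delta$ even when $T_B$ is as late as $T_E$. Concretely, Bob reveals $B$ by $T_B+\Delta\le T_E+\Delta$, and Alice then has at least one further $\Delta$ to redeem on $Chain_B$, which is the reason the lock is set to $T_E+2\Delta$ rather than $T_E+\Delta$; I would spell out this chain of inequalities carefully. A secondary point to check is that the two branches are genuinely exhaustive and mutually exclusive given conforming behavior — i.e., that a conforming Alice either deposits $Asset_A$ by $T_E$ or does not, with no intermediate state in which she has partially committed — which follows from the atomicity of the \texttt{exercise()} call. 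With these timing facts pinned down, the remaining bookkeeping is routine, and the full argument is deferred to the full version~\cite{peng_full_version}.
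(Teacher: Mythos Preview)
Your proposal is correct and follows essentially the same approach as the paper's proof: a case split on whether Alice exercises or abandons, with the exercise branch arguing that conforming Bob reveals $B$ on $Contract_A$ to reclaim $Asset_A$ and $Asset_G$, after which Alice uses $B$ to obtain $Asset_B$, and the abandon branch arguing that Bob's \texttt{refund()} calls succeed after expiration. Your write-up is in fact more thorough than the paper's, which omits the setup-phase invariant and the explicit timing chain $T_B \le T_E \Rightarrow T_B+\Delta \le T_E+\Delta \Rightarrow$ Alice can redeem on $Chain_B$ before $T_E+2\Delta$; these additions are sound and only strengthen the argument.
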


\begin{restatable}{theorem}{liveness}
Protocol \ref{sec:cfop} satisfies exercisablity: During the transfer from Bob to Dave, the option remains active, allowing Alice to exercise the option without any delays.
\end{restatable}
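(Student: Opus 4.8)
The plan is to view exercisability as an extension of option correctness (\Cref{theorem:opcor}) to the situation where a writer transfer is concurrently in progress, and to prove it in two parts: (i) the transfer of the writer position from Bob to Dave never reverts the option to an inactive state nor removes the assets that back it, so the option stays active for all of $[T_A, T_E]$; and (ii) the holder's \texttt{exercise()} entry point on $Contract_A$ remains available to Alice at every time $T_B \le T_E$, and whatever stage the transfer has reached, her call leads to the same outcome on the same timetable as in the base option of \Cref{sec:cfop}. Part (i) rests on a structural observation: the writer transfer (the mirror of \Cref{sec:ft-holder}, as spelled out in \Cref{sec:ftcop}) only rewrites the writer address, the writer's transfer public key, and the exercise hashlock $H(B) \to H(D)$ on $Contract_A$ and $Contract_B$; it never touches the active flag (set once Alice revealed $A$ at $T_A$) and never moves $Asset_G$ out of $Contract_A$ or $Asset_B$ out of $Contract_B$. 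So Alice can call \texttt{exercise()} the instant she wishes, without waiting for the transfer to conclude --- that is already the ``no delay'' part.

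Next I would case-split on the order of Alice's exercise deposit relative to Bob publishing the writer-transfer signature $\sigma_m$. If Alice deposits $Asset_A$ before $\sigma_m$ is revealed, the current writer is still Bob with hashlock $H(B)$, so the Exercise/Abandon phase of \Cref{sec:cfop} applies unchanged: within one $\Delta$ either Bob reveals $B$ (Bob takes $Asset_A$ and $Asset_G$, Alice redeems $Asset_B$ from $Contract_B$ with $B$) or Alice calls \texttt{claim()} after $T_B + \Delta$ and obtains $Asset_G$ as compensation, and the pending transfer is then harmless since \texttt{WriterFee} is refunded to Dave after $T_W$ (and may be reclaimed by Dave early using $B$, mirroring ``Carol refunds \texttt{HolderFee} using $A$'' in \Cref{sec:ft-holder}). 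If instead Alice deposits after $\sigma_m$ is public, the integration rule of \Cref{sec:ftcop} governs: the transfer continues, Dave forwards $\sigma_m$, becomes the writer with $H(D)$ installed on both contracts, and must reveal $D$ by $T_W - \Delta$ on $Contract_A$; then either Dave reveals $D$ (Dave takes $Asset_A$, Alice redeems $Asset_B$ from $Contract_B$ with $D$) or Alice obtains $Asset_G$ after the deadline. Because a transfer must start no later than $T_E - 4\Delta$, all of these deadlines fall inside the option's own window $T_E + 2\Delta$, so no schedule is stretched; the honest-party conclusion is then exactly \Cref{theorem:opcor} with Bob or Dave playing the writer.

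Finally I would discharge the adversarial corners using the earlier results. If $\sigma_m$ is public but neither Bob nor Dave ever reveals the exercise secret, Alice still collects $Asset_G$ via \texttt{claim()} --- this is where the transfer-confirmation delay matters, so that Alice's claim window is not consumed by the ongoing transfer. If Bob desynchronizes the two contracts (installs $H(D)$ on only one of them, or reveals two distinct signatures on the one message address), then by the DAPS extraction property Alice recovers $sk_B$ and, exactly as in the writer-transfer rules of \Cref{sec:ftcop} and the no-underwater guarantee of \Cref{theorem:safety}, reclaims $Asset_B$ together with $Asset_G$; alternatively she forwards $\sigma_m$ herself to restore consistency, and by \Cref{lemma:alice_parti} this is the only circumstance in which she must act at all. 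The hard part will be precisely this interleaving bookkeeping: checking, over every order of ``Alice deposits $Asset_A$'', ``Bob publishes $\sigma_m$'' and ``Dave forwards $\sigma_m$'', that Alice's deposited $Asset_A$ is never stranded (it is always either matched by the current writer revealing the exercise secret or recoverable alongside $Asset_G$), and that the withdrawal and transfer-confirmation delays are each long enough for Alice's contesting transactions to be confirmed before the governing timeout. Assembling the cases gives the statement.
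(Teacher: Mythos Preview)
Your proposal is correct and follows essentially the same approach as the paper: a case split on the phase of the writer transfer at the moment Alice exercises (before versus after the transfer signature $\sigma_m$ surfaces), showing in each branch that Alice can obtain $Asset_B$ via the appropriate preimage ($B$ or $D$) or else falls back to the compensation guarantee. Your write-up is considerably more detailed and systematic than the paper's terse version, and it additionally spells out the DAPS-extraction and desynchronization corners, but the underlying decomposition is the same.
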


\begin{restatable}{theorem}{compensation}
Protocol \ref{sec:cfop} satisfies failure compensation: Before expiration, Alice can exercise the option successfully, or if the exercise fails, she is compensated with the guarantee deposited by Bob.
\end{restatable}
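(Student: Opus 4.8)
The plan is to analyze the \texttt{Exercise/Abandon Phase} of Protocol~\ref{sec:cfop} by a case split on Bob's behavior after Alice initiates the exercise, and to show that in each case Alice ends up with either $Asset_B$ (successful exercise) or $Asset_G$ (compensation). First I would fix the setting: the option is active, so Alice has revealed $A$ before $T_A$, Bob has created $Contract_A$ and $Contract_B$, escrowed $Asset_G$ on $Chain_A$ and locked $Asset_B$ under hashlock $H(B)$ on $Chain_B$, and the relevant timeouts are $T_B \le T_E$, with $Asset_G$ locked in $Contract_A$ until $T_E+2\Delta$ and $Asset_B$ locked in $Contract_B$ until $T_E+2\Delta$. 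Alice initiates the exercise by calling \texttt{exercise()} at some $T_B \le T_E$ and depositing $Asset_A$ into $Contract_A$; this is the hypothesis of the theorem.

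The core case analysis then proceeds as follows. Case 1: Bob reveals $B$ on $Contract_A$ within one $\Delta$ of Alice's deposit. Then, by the contract logic, Bob obtains $Asset_A$ and $Asset_G$, and—crucially—$B$ is now public on $Chain_A$; I would argue that Alice observes $B$ and, since $Asset_B$ in $Contract_B$ is locked under $H(B)$ with timeout $T_E+2\Delta$ and we are at time at most $T_B+\Delta \le T_E+\Delta$, Alice can call the redemption on $Contract_B$ with $B$ before the timeout, so she obtains $Asset_B$: the exercise succeeds. Case 2: Bob does not reveal $B$ within one $\Delta$. Then I would invoke the clause allowing Alice to call \texttt{claim()} on $Contract_A$ after $T_B+\Delta$; since $Asset_G$ is still locked in $Contract_A$ until $T_E+2\Delta$ and $T_B+\Delta \le T_E+\Delta < T_E+2\Delta$, the call succeeds and Alice receives $Asset_G$ as compensation. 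A subtlety to address explicitly: Bob cannot both withhold $B$ and later \texttt{refund()} $Asset_G$, because \texttt{refund()} on $Contract_A$ is only enabled when the option is abandoned (no \texttt{exercise()} call by $T_E$), which is excluded by our hypothesis; and once Alice's \texttt{claim()} window opens at $T_B+\Delta$, the compensation path is the only one available to drain $Asset_G$, so there is no race Bob can win.

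I would also note the timing consistency that makes the two cases exhaustive and non-overlapping: the one-$\Delta$ response window for Bob starts at Alice's deposit and the \texttt{claim()} availability starts exactly at $T_B+\Delta$, so either Bob acts inside the window (Case~1) or the window closes and Alice proceeds (Case~2), and since $T_B+\Delta \le T_E+\Delta$ is strictly before the $T_E+2\Delta$ lock expiry of $Asset_G$, Alice is never too late. The main obstacle I anticipate is not the case analysis itself but arguing rigorously that Bob has no way to strand Alice in a ``limbo'' state—e.g., by revealing $B$ only on $Chain_B$, or revealing it on $Chain_A$ so late that Alice's redemption of $Asset_B$ on $Chain_B$ would miss its timeout. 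I would handle the first by observing that revealing $B$ anywhere makes it public knowledge usable on either chain, and the second by the explicit gap between $T_B+\Delta$ and $T_E+2\Delta$ built into the timeout schedule; if Bob reveals $B$ on $Contract_A$ he does so within his one-$\Delta$ window (otherwise he forfeits the ability to take $Asset_A$ and $Asset_G$ anyway), leaving Alice at least $\Delta$ of slack to claim $Asset_B$ on $Chain_B$. Formally invoking these timeout bounds is the delicate part; the rest follows directly from the contract unlocking conditions. Full details are deferred as in the companion proofs.
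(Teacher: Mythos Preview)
Your proposal is correct and follows essentially the same approach as the paper: a case split on whether Bob reveals $B$ within the one-$\Delta$ window after Alice deposits and calls \texttt{exercise()}, yielding $Asset_B$ in the first case and $Asset_G$ via \texttt{claim()} in the second. The paper's proof is terser---it invokes the option-correctness theorem for the success case and simply states the \texttt{claim()} path for the failure case---whereas you unpack the timing constraints and edge cases (Bob refunding, late reveals, reveals on the wrong chain) explicitly; this extra care is useful but does not constitute a different route.
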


The proofs are included in \Cref{sec:option_proof}.

\section{EVALUATION}
\label{sec:impl}

\mypara{Implementation} 
We implemented, tested, and evaluated our proposed protocol\footnote{Code is available in \url{https://github.com/sfofgalaxy/cross-chain-options}}, Efficient Cross-Chain Options without Upfront Holder Collateral in \Cref{sec:ftcop}.
We evaluated a cross-chain over two ERC-20 tokens on Ethereum Virtual Machine (EVM)-compatible chains.
The experiment was conducted within the latest EVM and the Solidity Compiler version 0.8.22.
The on-chain transaction cost (gas consumption) in this environment is identical to the current Ethereum network.
DAPS can be implemented by algorithms like Schnorr, DSA, and EdDSA, which are natively supported by most blockchains.
Implementing DAPS is straightforward, as signature algorithms inherent in blockchains can directly support it.
We employ the same signature algorithm as Ethereum, utilizing the secp256k1 curve and the Elliptic Curve Digital Signature Algorithm (ECDSA)~\cite{Johnson2001}. 
Since EVM does not support direct on-chain verification of a public-private key pair, we implement the proof of the private key $sk$ by signing a specific message with it.

\mypara{Expected Transfer Time Evaluation} 
We compared our work with TCO~\cite{transferable}, since currently, only their protocol achieves transferability without the need for a trusted third-party cross-chain bridge.
Assuming that the probability of an option being transferred and ultimately finalized within the current network is $p$, the total number of transfers is $X$.
Then, in their protocol, $X$ follows a geometric distribution, i.e., $X\sim G(p)$.
The expected successful transfer time and the transfer probability of each phase are illustrated in \Cref{fig:prob1} and \Cref{fig:prob2}.
When many malicious nodes exist in the current network, say, the finalization probability is 10\%, the duration of the mutate lock phase and the consistency phase in their protocol becomes significantly prolonged, reaching $45\Delta$, which is approximately equivalent to 2 days in Bitcoin.
By initiating the replace phase earlier and consolidating the mutate and consistency phases, we significantly reduce the duration of these phases. 

\mypara{Gas Consumption Evaluation}
\Cref{fig:gas} lists the gas consumption for contract deployment, option operations, and gas consumption in different phases\footnote{In the transfer failure case, we only consider the gas consumption of conforming parties.}, where gas price is $1$ Gwei (average price)~\cite{etherscan} and ETH price is $\$2313.17$ on 24 Feb, 2025.

Notably, as shown in \Cref{fig:gas1}, compared to TCO, the gas consumption for the holder transfer process has significantly decreased from 714,867 to 510,857 gas (a reduction of approximately 28.5\% for successful transfers).
For failed transfers, the gas consumption decreases from 330,350 gas to 248,388 gas (a reduction of approximately 3.4\%).
The gas consumption of the transferring writer also decreases to a similar level.
The gas consumption for exercising an option increases from 96,916 to 145,337, while the gas consumption for abandonment decreases.
This is because, during an exercise, Alice needs to deposit funds, and Bob must fulfill the request by revealing the exercise secret.

In contrast, for an abandonment, Bob only needs to perform a refund operation.
\Cref{fig:gas2} illustrates the gas consumption of a successful transfer across different phases. In our protocol, the reveal phase only requires the seller to reveal a signature in one contract, significantly reducing gas consumption compared to the mutate and replace/revert phases, lowering the gas for the holder and the writer to 123,158 and 123,435 gas, respectively.
However, gas usage in the consistency phase is higher than that of their protocol, as we verify signatures in both contracts to ensure consistency.
The gas consumption for contract deployment in our protocol is generally higher compared to TCO due to additional security measures and DAPS support.
The deployment costs for $Contract_A$ and $Contract_B$ in our protocol are about 2.5M and 2.2M gas.

Nonetheless, the gas costs are acceptable, although some operations increase gas usage.
Our protocol implements more secure operations and reduces the costs during transfers and abandonment.
On the other hand, gas fees for each operation (usually less than \$6) are insignificant compared to the option's value.
Although gas costs for exercising options and deploying contracts are higher, this expense ensures efficient and robust transfers.
Additionally, it also eliminates the holder's collateral, which can be invested as other liquidity and increase the capital efficiency.
This addresses traders' primary concerns.

\begin{figure}[ht]
    \centering
    \subfloat[Holder Transfer]{
        \includegraphics[width=0.244\textwidth]{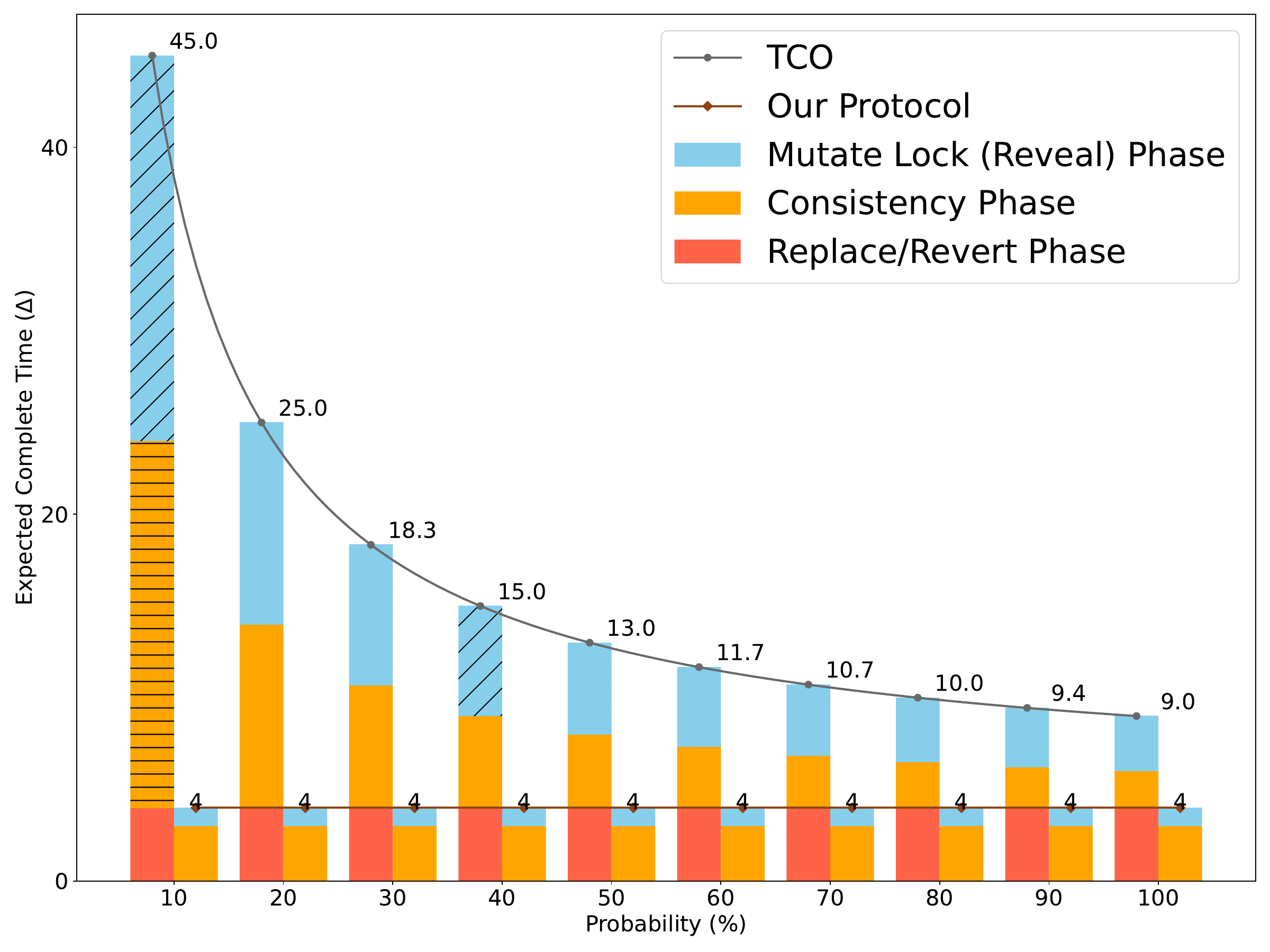}
        \label{fig:prob1}}
    \hspace*{-0.02\textwidth}
    \subfloat[Writer Transfer]{
        \includegraphics[width=0.244\textwidth]{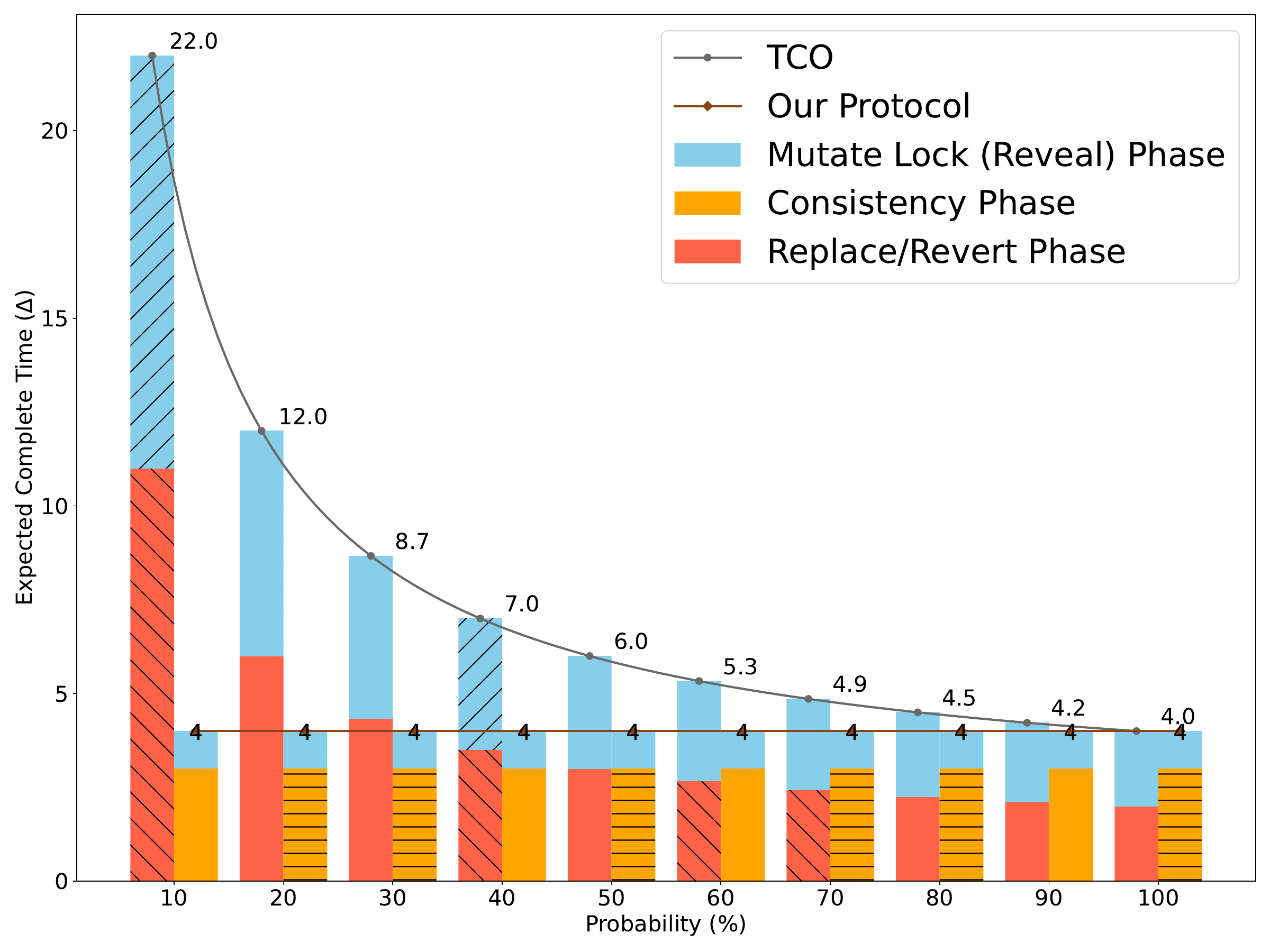}
        \label{fig:prob2}}
    \caption{Expected transfer time and successful probability.}
    \label{fig:prob}
\end{figure}

\begin{figure*}[ht]
    \centering    
    \subfloat[Transfer and Exercise/Abandon]{\includegraphics[width = .32\textwidth]{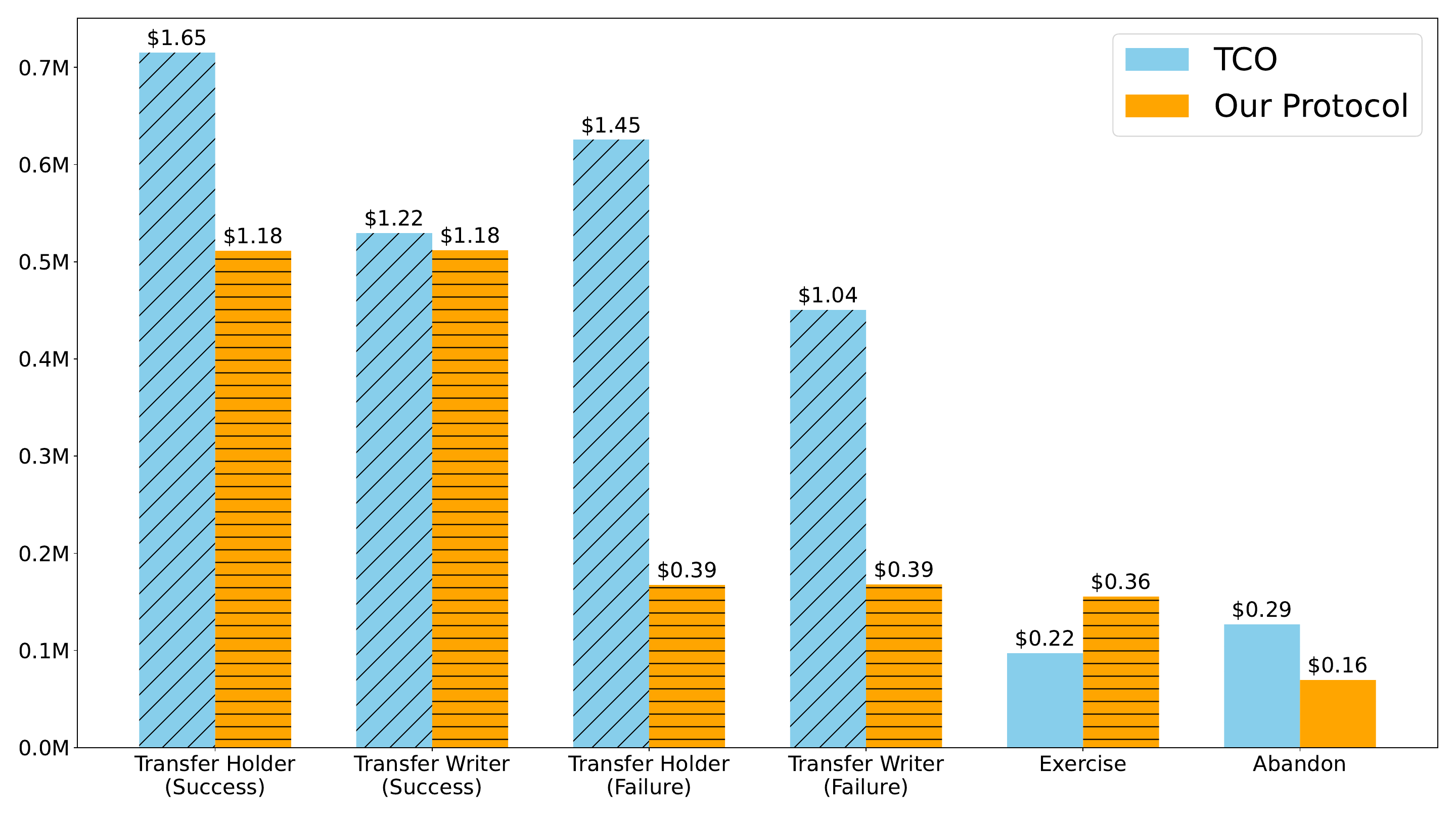}\label{fig:gas1}}
    \hfill
    \subfloat[In Different Phases]{ \includegraphics[width = .32\textwidth]{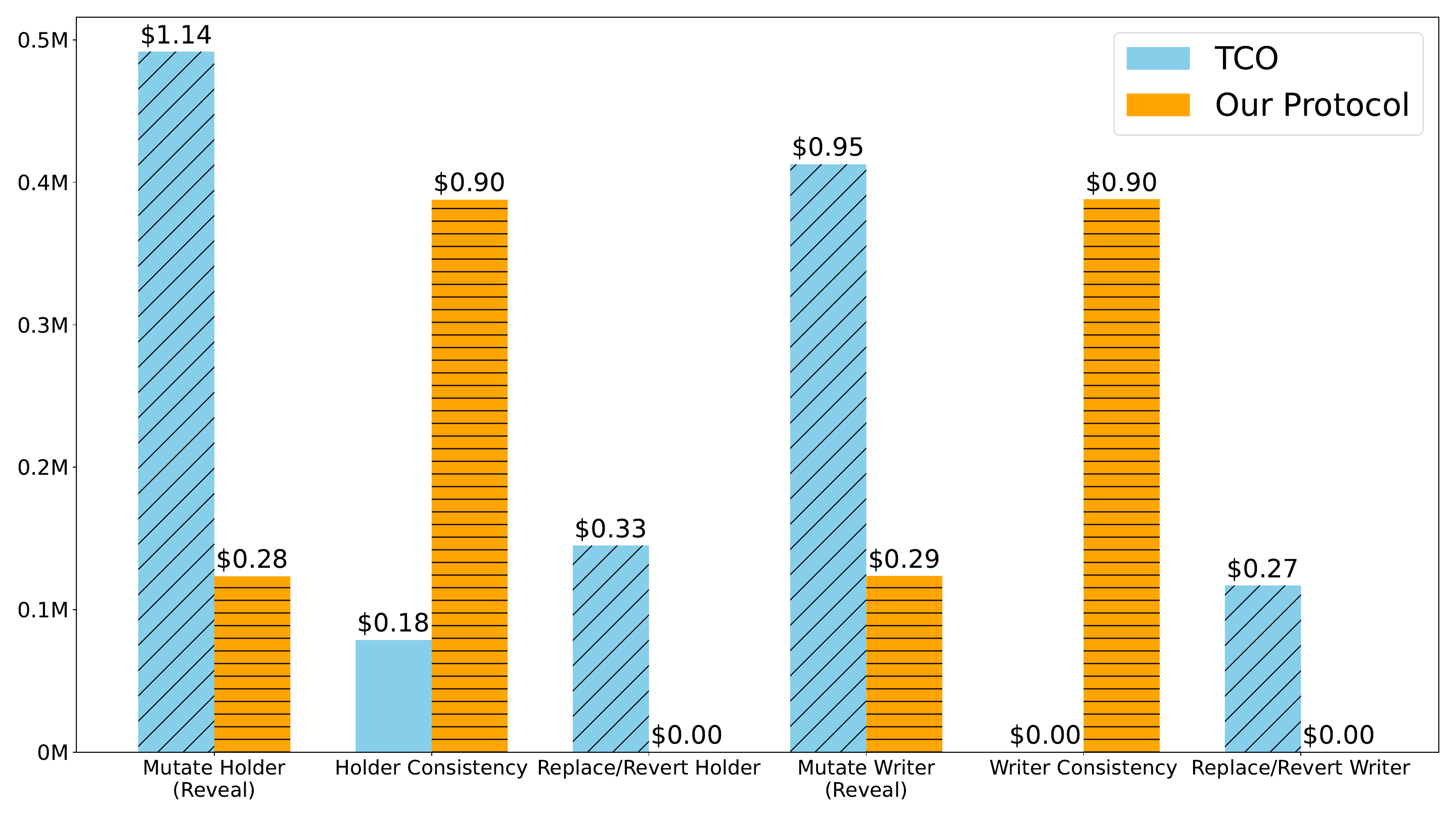}\label{fig:gas2}}
    \hfill
	\subfloat[Contract Deployment]{ \includegraphics[width = .32\textwidth]{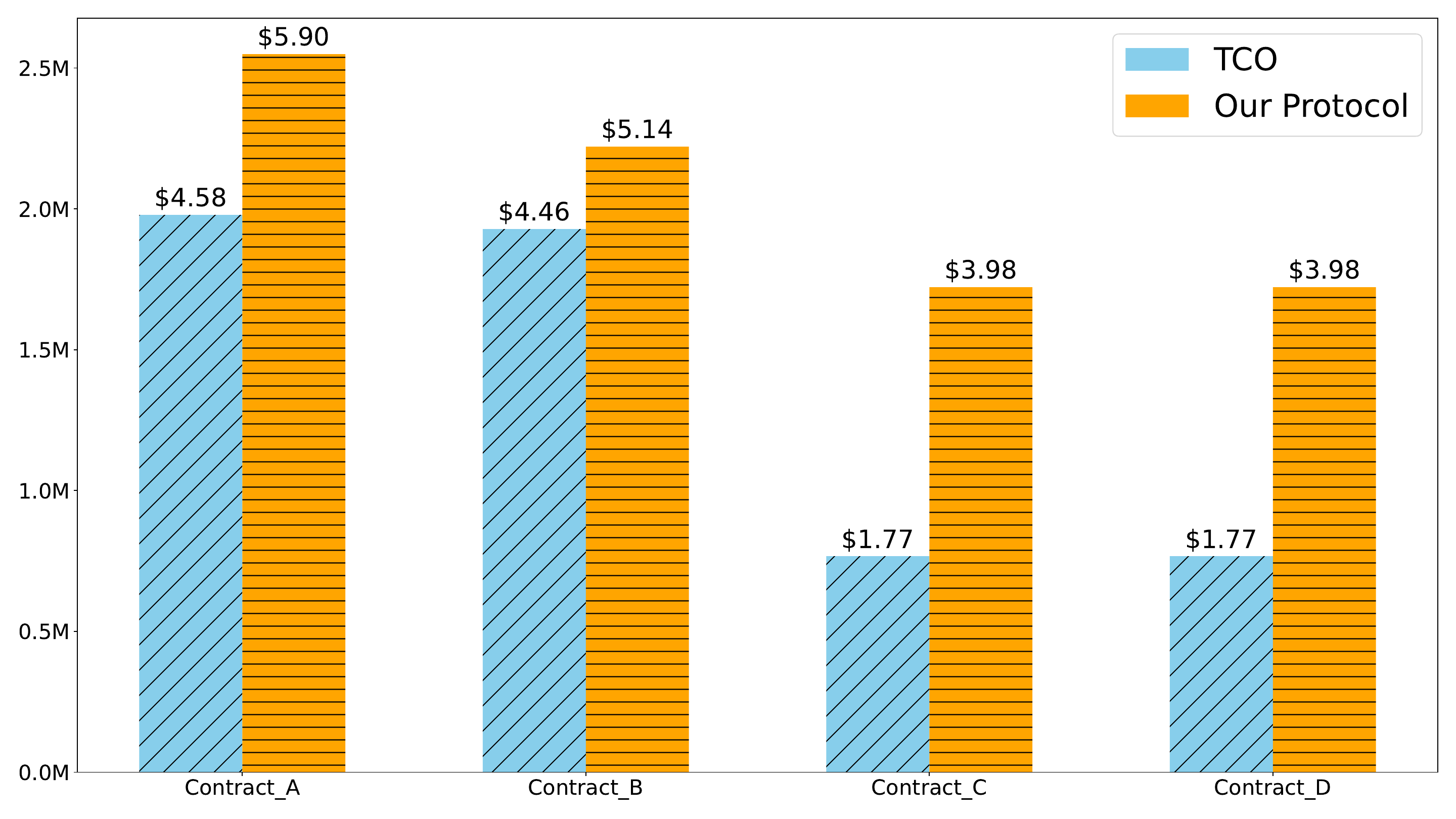}\label{fig:gas3}}
    \caption{Gas consumption with gas price of $1$ Gwei (on average) and ETH price of $\$2313.17$ (24 Feb, 2025)} 
    \label{fig:gas}
\end{figure*}

\section{RELATED WORK}
\label{sec:related}
Currently, many on-chain options trading platforms are available in the market.
They price the options using an Automatic Market Maker (AMM). Lyra~\cite{lyrafinance2024} stands as the preeminent decentralized options trading platform, commanding around a third of the market's TVL, and employs an AMM with a Black76~\cite{BLACK1976167} pricing model. However, its operations hinge on external data feeds from oracles, such as spot prices and implied volatility. Hegic~\cite{hegicwhitepaper} decentralizes the writers' risk and employs a fixed pricing rate based on option expiry date and target prices, which leads to less accurate pricing. 

In traditional markets, the price of options is determined by supply and demand.
Devising an effective pricing model for options by AMM faces challenges due to the lack of accurate supply and demand modeling.
Therefore, an order-book-based decentralized exchange shows up. Aevo~\cite{aevo} is a high-performance, order-book-based decentralized exchange, which closely resembles the traditional options market. However, its current implementation employs an off-chain orderbook coupled with on-chain settlement, which introduces a higher degree of centralized risk into the protocol.
Opyn~\cite{opyn} provides users with the ability to sell European options by minting \textit{ERC20} tokens as the option. These tokens can be destroyed to exercise rights or transacted in the market. However, the system faces challenges due to high gas fees on Ethereum and a lack of necessary liquidity for exchanges. All of the above on-chain protocols lack universality, most~\cite{lyrafinance2024, opyn, hegicwhitepaper, aevo} currently only support ETH and BTC options trading.

An HTLC can address the aforementioned issues by enabling two parties to create option contracts across two chains. These contracts lock assets, agreed upon by both parties, on two chains at a predetermined price.
HTLCs~\cite{fasterAtomic,Fault-Tolerant,bitcointtalk} were originally designed for cross-chain atomic swaps.
Subsequently, Han \textit{et al.}~\cite{FairnessAtomicSwaps} highlighted the optionality and fairness aspects for one party, demonstrating that an atomic cross-chain swap is equivalent to a premium-free American call option.
They estimate premiums with the Cox-Ross-Rubinstein option pricing model~\cite{COX1979229}.
They address the unfairness by incorporating a premium mechanism. In~\cite{SoreLoserAttacks}, the authors define a sore loser attack in cross-chain swaps and let participants escrow assets along with a negotiated option premium, which acts as compensation.
Nadahalli \textit{et al.}~\cite{griffree} separate the premium protocol from the collateral protocol, employing upfront communication of off-chain unspent transaction outputs as the option premium and collateral.
Other researchers~\cite{cfa, liu2020atomic} introduce cross-chain atomic options, incorporating concepts such as the holder's late margin deposit and early cancellation of the option.

In TCO~\cite{transferable}, the authors introduce the transferability of options. However, their approach requires long transfer times and does not support concurrent trading involving multiple buyers, which may lead to phantom bid attacks.
An adversary can create multiple fake buyers who offer higher prices but fail to complete the transfer. Consequently, the option holder is unable to sell their position.

None of these protocols eliminates the holder's collateral in cross-chain options.
To eliminate the holder's upfront collateral requirement, cross-chain transaction confirmation can be adopted to verify the collateral deposition on one chain when the option is exercised.
This approach can employ cross-chain bridges.
Some cross-chain bridges rely on external verification and introduce a trusted third party to facilitate message transmission.
This approach is vulnerable to many attacks~\cite{zhangbridges,SlowMist}, such as rug pulls~\cite{OrdiZK}, code vulnerabilities~\cite{polyHack}, and private key leakage~\cite{ALEX}.
Some bridges employ native verification and use light clients on both chains to verify proofs. This method requires complex smart contracts and incurs high verification and storage costs~\cite{daveasGasEfficientSuperlightBitcoin2020,Glimpse,zamyatinXCLAIMTrustlessInteroperable2019,zkBridge}.
The diversity and heterogeneity of blockchains significantly increase the time and cost of implementing a light client for each chain.
An alternative is using TEE for cross-chain transactions~\cite{tesseract,BDTF}.
Those solutions are susceptible to many vulnerabilities, including side-channel attacks, which present significant security risks~\cite{TEESecurity,TEECS,TEEsoftAttack,TEEAttackCrossTalk}.

Our proposed approach provides an efficient cross-chain option protocol by combining HTLC logic with a signature scheme. This combination facilitates the transfer of positions and replacement of hashlocks in option contracts. Our approach eliminates the need for option holders to provide upfront collateral. Instead of relying on cross-chain bridges, we achieve this through a distributed protocol design bolstered by economic incentives.

\section{CONCLUSION AND DISCUSSION}
\label{sec:conclusion}
In this paper, we propose an approach for efficient and collateral-free cross-chain options.
Compared to an HTLC-based option, our approach incorporates a signature scheme, DAPS, to facilitate efficient option transfer, reducing the transfer time to half of the previous work.
Akin to traditional options, our approach does not require the option holder to put a collateral upfront when an option is established, thus making the most of the advantages of options for leveraging and hedging.
With our proposed method, the option markets can be more efficient.

One key feature of our proposed protocol is its resilience against phantom bidder attacks. The protocol supports concurrent bidding and can finalize the option transfer in constant time, not influenced by potential malicious option buyers.

In practice, we can further utilize the liquidity of the writer's collateral.
For example, Bob, the option writer, can still use the funds in the contract for trading, anyone can flash loan Bob's collateral, completed in a single transaction.

The options we explored in this paper are covered options. In option markets, there are other types of options, for example, naked options, where sellers must maintain their margin accounts at a threshold according to market fluctuations, otherwise face liquidation. In the future, we plan to design naked options with liquidation features and consider how to ensure the rapid transfer of cross-chain options.

\bibliographystyle{IEEEtran}
\bibliography{reference}

\appendix

\subsection{PROOFS}
\label{sec:proofs}

\subsubsection{Transfer Protocol Proofs}
\label{sec:trans_proofs}

\safety*
\begin{proof}
    \label{sec:safety}
    In the $Contract_A$, the following elements are defined:
    \begin{itemize}
        \item $T_E$: The expiration time of this option.
        \item \textit{exercise\_hashlock:} The hash lock of this option, which is the hash of a secret value known only to the writer.
        \item \textit{old\_exercise\_hashlock:} The hash lock of this option, which is the hash of a secret value known only to the writer.
        \item \textit{holder:} The holder can call $exercise()$ to exercise the option before $T_E$.
        \item \textit{guarantee:} The writer's asset, i.e. $Asset_G$, which can be any asset mutually agreed upon by the holder and writer as guarantee. This can include tokens, NFTs, or any other type of asset.
        \item \textit{writer:} The writer can use the secret value to call $refund()$ to retrieve the guarantee or retrieve it directly after $T_E+2\Delta$.
        \item \textit{collateral:} The collateral that Alice must deposit if she decides to exercise the option to purchase Bob's asset.
        \item \textit{holder\_transfer\_public\_key:} the transfer key of Alice, $pk_A$, used for verify the transfer signature of Alice to Carol.
        \item \textit{writer\_transfer\_public\_key:} New transfer key of Dave, $pk_D$, used for verify the transfer signature of Dave to others.
        \item \textit{old\_writer\_transfer\_public\_key:} Old transfer key of Bob, $pk_B$, used for verify the transfer signature of Bob to Dave, Within the period of one $\Delta$, during which the transfer signature must be submitted to this contract, we still need to record the old transfer public key in case of Bob's misbehavior.
        \item \textit{writer\_transfer\_time:} The writer transfer time, used for Alice to claim assets if there exits misbehavior of Bob.
    \end{itemize}
    In the $Contract_B$, there are other additional items:
    \begin{itemize}
        \item \textit{collateral:} The writer's collateral, i.e. $Asset_B$, it can be claimed by holder with preimage of hashlock.
        \item \textit{holder:} The holder can call $exercise()$ to exercise the option before $T_E$.
        \item \textit{writer:} The writer can call $refund()$ to retrieve the guarantee or retrieve it directly after $T_E+2\Delta$.
    \end{itemize}
    In the $Contract_D$, the following elements are defined:
    \begin{itemize}
        \item \textit{T\_W:} The deadline for seller to reveal signature.
        \item \textit{buyer:} writer position buyer, i.e. Dave.
        \item \textit{seller:} writer position seller, i.e. Bob.
        \item \textit{old\_exercise\_hashlock:} The hashlock of exercise, if Bob reveals during the transfer, Dave is able to reclaim with preimage.
        \item \textit{exercise\_hashlock:} The new hashlock of exercise, generated by Dave.
        \item \textit{old\_writer\_transfer\_public\_key:} Bob's transfer public key, used for verify the signature of Bob.
        \item \textit{writer\_transfer\_public\_key:} New transfer public key generated by Dave, used for replacing Bob's key.
        \item \textit{transfer\_time:} Used for record the time of transfer (the time reveal signature) and calculate the withdrawal delayed period.
    \end{itemize}
    Take Bob transferring his position to Dave as an example, since Bob deposit $Asset_G$ and $Asset_B$ into the contracts, which is more complex. Transferring Alice's position to Carol is more simple.
    
    By \Cref{lemma:atomicity}, if compliant Bob loses his position, he will at least obtain Dave's collateral during the writer transfer process. 
    
    If Dave is conforming, then if Bob acts maliciously on his own, Bob provides two different signatures to different buyers, Dave can reclaim the transfer fee with extracted $sk_B$ since $D$ records \textit{old\_writer\_transfer\_public\_key} i.e. $pk_B$. If Bob reveals $B$ at the same time during transfer process, then Dave can use $B$ to reclaim $\texttt{WriterFee}$ since $Contract_C$ records \textit{old\_exercise\_hashlock} i.e. $H(B)$. If Alice and Bob collude, they can use $sk_B$ or $B$ to withdraw $Asset_G$ and $Asset_B$. Then, Dave can observe $sk_B$ or $B$ and withdraw $\texttt{WriterFee}$ during withdrawal delay period since $Contract_D$ records \textit{transfer\_time}.

    If Alice is conforming, then If Bob provides two different signatures to different buyers, Alice can extract $sk_B$ and submit it to obtain $Asset_G$ and $Asset_B$. If Bob or Dave publishes one signature exclusively on either $Contract_A$ or $Contract_B$, Alice can forward this signature to another chain to make sure the exercise secret hashlocks are consistent on two chains. If Bob and Dave collude, they use two signatures to change the hashlock. During the withdrawal delay period, Alice can obtain $Asset_G$ and $Asset_B$ using the extracted $sk_B$.

    Transferring Alice's position to Carol is simpler, as Alice does not deposit assets into the option contracts and cannot modify the exercise secret hashlock. Carol only needs to ensure consistency between the holders on the two chains. Otherwise, she can extract $sk_A$ and refund the $\texttt{HolderFee}$ during the withdrawal delay period.
\end{proof}

\unobstructibility*
\begin{proof}
By \Cref{lemma:parti}, Bob's participation is not required, it is evident that Bob cannot block the process of transferring a holder's position.
During Bob's transfer to Dave, Alice can only obtain Bob's collateral by two different messages signed with $sk_B$ or the exercise secret.
If Bob is honest, he will neither leak $s_B$, sign multiple messages nor leak exercise secret.
Consequently, Alice cannot interrupt the transfer process.
\end{proof}

\sbindependence*
\begin{proof}
After $Alice_i$ transfers to $Alice_{i+1}$, the holder in the current option's $Contract_A$ and $Contract_B$ is updated to $\text{Alice}_{i+1}$, and the transfer key is known only to $Alice_{i+1}$. Therefore, after a holder transfer, $Alice_{i+1}$ can transfer the position to $Alice_{i+2}$ by re-performing Protocol \ref{sec:ftcop} with the transfer key of $Alice_{i+2}$. Similarly, after $Bob_j$ transfers to $Bob_{j+1}$ (holder Alice does not contest within $\Delta$), the writer in the current option's $Contract_A$ and $Contract_B$ is updated to $\text{Bob}_{j+1}$. At this point, only $sk_B^{j+1}$ or its signatures can be used for the next transfer. $Bob_{j+1}$ can also transfer the position by re-performing Protocol \ref{sec:ftcop} with the new transfer key.
\end{proof}

\begin{lemma}
\label{lemma:atomicity}
Protocols \ref{sec:ftcop} satisfy atomicity: If conforming Alice/Bob loses their position, she/he will be able to obtain Carol/Dave's collateral.
\end{lemma}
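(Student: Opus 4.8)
The plan is to prove the two symmetric cases --- conforming Alice losing the holder position, and conforming Bob losing the writer position --- by exhausting the ways a position can be ``lost'' during the transfer protocol of \Cref{sec:ftcop} and checking that, in each case, the conforming seller has already collected (or can still collect) the corresponding transfer fee. I would set things up by first observing that a conforming seller only reveals a DAPS signature $\sigma_m$ on the bid contract ($Contract_C$ for Alice, $Contract_D$ for Bob) of the single bidder she/he chose, and only after that bidder has locked the transfer fee; this is the crucial invariant, and it follows from the protocol description and \Cref{lemma:parti,lemma:alice_parti}.

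For the holder case, suppose conforming Alice loses her position, i.e. the holder recorded in $Contract_A$ and $Contract_B$ changes away from Alice. The only function that rewrites the holder is \texttt{transferHolder()}, which requires a valid signature under $pk_A$ of a message $m=(a,(\cdot,H(\cdot),pk_\cdot))$. Since Alice is conforming, she signed exactly one such $m$, for the chosen bidder Carol, and only after Carol deposited $\texttt{HolderFee}$ in $Contract_C$; moreover she never leaked $sk_A$ and never revealed a second signature, so nobody can \texttt{Extract} $sk_A$, and the ``reclaim with $sk_A$'' and ``refund with $A$'' branches of $Contract_C$ (\Cref{tab:unlock_condition}) are unavailable to Carol unless Alice herself exercised --- which a conforming Alice would only do according to protocol, in which case Carol is refunded and the swap completes, so this sub-case is also consistent. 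Hence after the withdrawal delay period Alice calls \texttt{withdraw()} and obtains $\texttt{HolderFee}$; the position went to Carol and the fee came to Alice, as required.

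The writer case is the mirror image, and here I would lean on the proof of \Cref{theorem:safety} (which already walks through Bob-vs-Dave in detail). The writer is rewritten only by \texttt{transferWriter()}, gated by a signature under $pk_B$ of $m=(a,(\text{Dave}.address,H(D),pk_D))$. A conforming Bob signs this once, for the chosen Dave, only after $\texttt{WriterFee}$ is locked in $Contract_D$; he never leaks $sk_B$ or the exercise secret $B$ and never double-signs, so the $sk_B$/$B$-based refund branches of $Contract_D$ are closed to Dave, and after $T_W+\Delta$ Bob is unobstructed (by \Cref{theorem:unobstructibility}) and collects $\texttt{WriterFee}$. One subtlety to handle explicitly: if Alice concurrently exercises during Bob$\to$Dave, the protocol says the transfer still proceeds and Dave forwards the signature, so Bob's loss of position is still matched by $\texttt{WriterFee}$; I would note this branch and dispatch it the same way.

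The main obstacle I anticipate is not any single case but the bookkeeping of ``conforming'' behavior across the two contracts simultaneously --- in particular ruling out that an adversarial counterparty or bidder, possibly colluding, can rewrite the holder/writer on \emph{one} chain while preventing the seller from ever redeeming the fee on the bid chain. The key to closing this is the DAPS extraction property together with the withdrawal delay period: any inconsistency forced on-chain either needs a second signature (yielding $sk_\cdot$, which the conforming seller never exposes, so an adversary producing it has, by definition, the seller misbehaving --- contradicting the hypothesis) or needs the exercise secret (handled as above). I would therefore structure the argument as: (i) state the single-signature invariant; (ii) enumerate the on-chain state transitions that remove the seller's position; (iii) for each, show the fee-redemption branch on the bid contract is reachable by the conforming seller and its blocking branches are unreachable; and (iv) invoke \Cref{theorem:unobstructibility} for the timing. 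Symmetry then collapses the two cases into essentially one.
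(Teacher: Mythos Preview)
Your proposal is correct and takes a genuinely different route from the paper. The paper's proof of this lemma is a one-paragraph sketch that appeals to \Cref{theorem:correctness} and then narrates the intended message flow (Carol escrows, the position changes, Alice collects at $T_H+\Delta$; symmetrically for Bob/Dave at $T_W+\Delta$). It does not isolate the DAPS single-signature invariant, does not enumerate the refund branches of the bid contract, and does not argue against an adversarial bidder; indeed, its wording (``Alice temporarily locks the holder position \ldots using $H(C)$'', ``If Carol uses $C$ to obtain the position'') reads like a hashlock-style transfer rather than the signature-forwarding mechanism of \Cref{sec:ft-htlc}. Your decomposition --- (i) the conforming seller signs exactly once and only after the fee is locked, (ii) position rewrites are gated by such a signature, (iii) the fee-contract refund branches ($sk$, exercise preimage, timeout) are unreachable under that invariant, (iv) timing via unobstructibility --- is a tighter argument and matches the actual protocol better. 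What the paper's approach buys is brevity; what yours buys is an argument that survives an adversarial Carol/Dave, which the paper's sketch does not really attempt.

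Two small clean-ups. First, you propose to ``lean on the proof of \Cref{theorem:safety}'' for the writer case, but in the paper that proof \emph{opens} by invoking \Cref{lemma:atomicity} for precisely the statement ``if compliant Bob loses his position, he obtains Dave's collateral'', so that citation is circular. This is harmless because you then give the Bob-side argument from scratch anyway; just drop the appeal to \Cref{theorem:safety} and keep your direct argument. Your uses of \Cref{lemma:parti}, \Cref{lemma:alice_parti}, and \Cref{theorem:unobstructibility} are fine --- none of those depend on the present lemma. Second, in the integrated protocol of \Cref{sec:ftcop} the roles are reversed: Alice does not hold the exercise secret, so her transfer message has the form $m=(a,(\text{Carol}.address,pk_C))$ without a hashlock component, while it is Bob's message that carries $H(D)$. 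Adjust your holder-case message shape accordingly; it does not change the structure of the argument.
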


\begin{proof}
Following \Cref{theorem:correctness}, in transferring the holder position, after Carol correctly escrows the collateral, Alice temporarily locks the holder position in both contracts using $H(C)$. If Carol uses $C$ to obtain the position before $T_H$, then Alice will obtain Carol's collateral at $T_H + \Delta$. If Carol does not reveal $C$ before $T_H$, Alice will not receive Carol's collateral. Similarly, in transferring the writer position, if Bob does not reveal his signature honestly, then Bob will lose the position and Dave can retrieve and will not lose the collateral. If honest Bob signs for a buyer Dave, the honest Dave will use the signature to obtain Bob's position at $T_W$. Bob will then obtain Dave's collateral at $T_W + \Delta$.
\end{proof}

\counterparty*
\begin{proof}
Suppose both Carol and Dave are interested in Alice's and Bob's positions, respectively. 
According to \Cref{lemma:parti}, Alice transferring to Carol does not require Bob's involvement, hence Alice and Carol will not be interfered with. 
Similarly, it is known that during Bob's transfer to Dave, by \Cref{lemma:alice_parti}, if Bob and Dave are compliant, Alice does not need to participate. 
Considering the case when Bob reveals two different signatures: (i) If Carol has already revealed the secret value $C$ of the transfer hash lock, then Carol becomes the new holder and can use two different signatures by $s_B$ to obtain $Asset_B$ and $Asset_G$. 
(ii) If Carol has not revealed $C$ and will reveal it after $\Delta$, Carol can simultaneously reveal $C$ and call $reclaim()$ on both chains after $\Delta$ to obtain $Asset_B$ and $Asset_G$. 
If Dave or Bob publishes $\sigma_m$ on one single chain, Carol must forward $\sigma_m$ to the other chain while revealing $C$.
\end{proof}

\subsubsection{Option Properties Proofs}
\label{sec:option_proof}

\opcor*
\begin{proof}
According to Protocol \ref{sec:cfop}, it is evident that if Alice escrows her collateral in the $Contract_A$ contract and calls $exercise()$, then a conforming Bob will reveal the pre-image $B$ in $Contract_A$ to reclaim the guarantee $Asset_G$ and Alice's collateral $Asset_A$. Subsequently, Alice can use $B$ to obtain $Asset_B$. If Alice does not escrow the collateral, Bob will not reveal $B$. After the option expires at $T_E + 2\Delta$, Bob can call $claim()$ and $refund()$ on the respective chains to reclaim $Asset_G$ and $Asset_B$.
\end{proof}

\liveness*
\begin{proof}
According to Protocol \ref{sec:ftcop}, during the transfer from Bob to Dave, Alice can make a deposit and exercise her option at any time. If the transfer is in the Setup Phase, Bob will need to reveal $B$ to fulfill his obligation and revoke the transfer. It is important to note that Dave can use $B$ to reclaim $Trans_W$. If the transfer is in the Attempt Phase and Bob acts maliciously by using $B$ to take $Asset_G$, Alice can use $B$ to obtain $Asset_B$. Dave will need to use $B$ on $Contract_B$ to withdraw the transfer. Otherwise, when Dave uses $\sigma_m$ to change the writer and the hash lock, he will reveal a new preimage secret $D$, which Alice can then use to obtain $Asset_B$.
\end{proof}

\compensation*
\begin{proof}
By \Cref{theorem:opcor}, if Alice successfully exercises her option, she will receive Bob's collateral. Otherwise, after Alice makes a deposit and calls $exercise()$, $Contract_A$ can invoke $isDeposited()$ to determine if the exercise has occurred. If Bob does not fulfill his obligation within a period of $\Delta$, Alice can call $claim()$ to obtain $Asset_G$ as compensation, and Bob will lose his guarantee.
\end{proof}

\end{document}